\documentclass[12pt,english]{article}

\usepackage{geometry}
\usepackage[english]{babel}
\usepackage{setspace}
\usepackage{amsmath}
\usepackage{amssymb}
\usepackage{amsthm}
\usepackage{booktabs}
\usepackage{graphicx}
\usepackage{subcaption}
\usepackage[round]{natbib}
\usepackage{bibunits}
\usepackage{hyperref}
\IfFileExists{to_do_list.tex}{
    \newcommand{\paperpath}{}
    \defaultbibliography{References/ref}
}{
    \newcommand{\paperpath}{Paper/}
    \defaultbibliography{Paper/References/ref}
}
\graphicspath{{\paperpath}}
\hypersetup{
    colorlinks=true,
    citecolor=blue,
    linkcolor=black,
    urlcolor=blue
}

\defaultbibliographystyle{plainnat}

\newcommand{\figurenote}[1]{%
    \par\vspace{0.4em}%
    \begin{minipage}{0.92\textwidth}%
        \footnotesize\emph{Notes:} #1%
    \end{minipage}%
}

\newtheorem{assumption}{Assumption}[section]
\newtheorem{proposition}{Proposition}

\title{Identifying the MPC-Liquidity Gradient in High-Quality Data\thanks{We thank Jeppe Druedahl, Johan Gustafsson, Hannes Malmberg, Christian Matthews, Chris Moser, Azeem Shaikh, Fredrik Sävje and Gianluca Violante for helpful discussions. We are also thankful for comments from participants at multiple seminars and conferences. We are grateful for financial support from Riksbankens Jubileumsfond (program M23-0019) and Handelsbankens Forskningsstiftelse (project P25-0018). The opinions expressed in this article are the sole responsibility of the authors and should not be interpreted as reflecting the views of Sveriges Riksbank.}}
\author{Mikael Carlsson\textsuperscript{\dag}, Marco D'Amico\textsuperscript{\ddag}, Erik Öberg\textsuperscript{\S},\\ Oskar N. Skans\textsuperscript{\P}, Karl Walentin\textsuperscript{$\parallel$}}
\date{\today}

\begin{document}
\begin{bibunit}

\clearpage\maketitle
\begingroup
\renewcommand{\thefootnote}{\fnsymbol{footnote}}
\footnotetext[2]{\scriptsize \textsuperscript{\dag} Uppsala University, UCLS and Sveriges Riksbank, mikael.carlsson@nek.uu.se; \textsuperscript{\ddag} Uppsala University, marco.damico@nek.uu.se; \textsuperscript{\S} Uppsala University, University of Oslo, UCLS and CESifo, erik.oberg@nek.uu.se; \textsuperscript{\P} Uppsala University and UCLS, oskar.nordstrom\_skans@nek.uu.se; \textsuperscript{$\parallel$} Uppsala University and Sveriges Riksbank, karl.walentin@nek.uu.se.}
\endgroup
\thispagestyle{empty}

\begin{abstract}
\begin{spacing}{0.95}
\noindent
We estimate the gradient of the Marginal Propensity to Consume (MPC) with respect to liquidity using a new estimator designed for administrative data with negligible measurement error in income. 
We derive a state-dependent consumption pass-through equation from the canonical buffer-stock model, and show that the pass-through coefficient of this equation can be used to construct tight bounds on the MPC conditional on the relevant state. 
We recover latent permanent and transitory income shocks with a Kalman smoother and use them as regressors in an empirical representation of the consumption equation.
The Kalman-shock estimator identifies the theoretical pass-through coefficient under the assumption of negligible measurement error in income, and attains the minimum variance within the class of estimators that are linear in household income histories, including the canonical GMM estimator by \citet{Blundell_et_al_2008_AER} and recent refinements thereof.
Applying the method to Swedish administrative tax registers, we show that consumption responses to transitory shocks have a sharp negative and convex gradient with respect to cash-on-hand; the associated annual MPC falls from 0.7 in the lowest cash-on-hand decile to 0.3 in the top decile.
The permanent-shock pass-through is close to one across the cash-on-hand distribution.
These patterns are not visible when using traditional, less efficient, estimators.
\end{spacing}
\end{abstract}

\textbf{JEL classification:} D15, E21, J11.\\
\textbf{Keywords:} Marginal Propensity to Consume, Semi-structural Methods, Microdata.

\section{Introduction}
The marginal propensity to consume (MPC) is a key link between household income fluctuations and aggregate demand. It matters for the size of fiscal multipliers, the transmission of monetary policy, and the amplification of macroeconomic shocks.
It is also a key statistic for distinguishing consumption-savings theories and for disciplining heterogeneous-agent models used for counterfactual policy analysis.
As a consequence, a rapidly evolving literature has focused on documenting the level and distribution of the MPC. 

A central branch of this literature employs the ``semi-structural'' approach, pioneered by \citet{Blundell_et_al_2008_AER} (henceforth BPP), and later refined by \citet{Commault2022} (henceforth BPP-C). The key insight in these papers is that, under a linear consumption rule, the pass-through of transitory income shocks to consumption can be estimated using linear combinations of the household's income history (``covariance restrictions'').
These estimators were developed primarily for survey data where income is measured with error and where sample sizes limit the scope for estimating heterogeneous treatment effects. In that environment, robustness to measurement error is central.

Recent papers apply these semi-structural estimators to large administrative datasets to estimate MPCs and MPC gradients, including \citet{Ganong2025}, \citet{FlorinOBilbiie2025}, and \citet{Crawley2023}. In these settings, income histories are precisely measured, and sample sizes are large enough to study heterogeneity directly. Robustness to severe income measurement error remains useful, but efficiency and state dependence become more valuable. 

This paper adapts the semi-structural approach to a modern data-rich environment. First, we derive a state-dependent pass-through equation from a canonical buffer-stock consumption problem, and show how the pass-through coefficient bounds the MPC conditional on the relevant state. The state variables include cash-on-hand normalized by permanent income, which we take as our measure of liquidity. Second, we show that Kalman-smoothed shocks identify the pass-through coefficient and give the minimum-variance signal in the class of identifying signals that are linear in the household's income history, which includes BPP and BPP-C. Third, we apply the estimator to Swedish administrative data on income, wealth, and budget-constraint-imputed consumption. The main empirical result is a sharp and convex negative MPC gradient with respect to normalized cash-on-hand: MPCs fall especially quickly at low levels of cash-on-hand and then flatten out higher in the distribution. A traditional BPP-style estimator is too imprecise to detect this gradient in our data.

Our theoretical contribution is to connect state-dependent pass-through coefficients to MPCs, building on the consumption-growth approximation of \citet{Blundell_Low_Preston_2013_QE} and \citet{Arellano2017}. We start from a permanent-transitory MA(k) income process and a buffer-stock consumption-savings problem in the tradition of \citet{Deaton1991} and \citet{Carroll1997}. A linear approximation of the consumption policy gives the pass-through equation. This equation generalizes the BPP specification by allowing pass-through coefficients to vary with predetermined household states, especially cash-on-hand normalized by permanent income. The coefficient on a transitory income innovation cannot mechanically be translated into the MPC with respect to current income, because such innovations may also contain news about future income whenever $k>0$. However, when the persistence of transitory income shocks is modest, as in our data, the pass-through coefficient delivers tight bounds on the MPC with respect to current cash-on-hand. The framework also accommodates liquid/illiquid-asset models with non-convex adjustment costs, provided that the consumption equation is estimated conditional on no adjustment in the illiquid account, which we observe in our data.

Our estimator of the state-dependent pass-through coefficient is based on retrieving the latent permanent and transitory income shocks using a Kalman smoother, inspired by recent work applying such smoothers to recover shocks from income data \citep{Braxton2025}. Relative to existing estimators, the advantage of this approach is that we use the full income history, both past and future, to recover household-level income shocks. Specifically, we first estimate the parameters of a common permanent-transitory income process from pooled income-growth moments. We then use the Kalman smoother to form linear predictions of the latent permanent and transitory shocks. Finally, we regress consumption growth on the recovered shocks, allowing the coefficients to vary with predetermined state variables.

This procedure is especially well suited for high-quality administrative income data. Under negligible income measurement error, the Kalman-smoothed shock is the best linear prediction of the latent shock from the household's income history, even when the underlying shock distributions are leptokurtotic (as emphasized in \citet{Guvenen2021}). This projection identity implies that the population coefficient on the recovered transitory shock identifies the pass-through coefficient and that the Kalman-smoothed signal has minimum variance within the class of identifying signals that are linear in household income histories, a class that includes the BPP and the refined BPP-C estimator. Notably, however, with non-negligible income measurement error, the Kalman smoother will not be able to distinguish noise from transitory income shocks, and our approach will therefore be plagued by attenuation bias when applied to survey data. 

We apply the estimator to data from the Swedish administrative tax registers for 2000--2007, covering the universe of tax-paying individuals.\footnote{The same data have been used to study consumption behavior in \citet{Floden_Kilstrom_Sigurdsson_Vestman2021_EJ} and \citet{DiMaggio_Kermani_Majlesi2020_JF}.} The data contain income, transfers, demographics, and wealth, and allow us to impute consumption expenditures from the household budget constraint, following \citet{BrowningLethPetersen2003} and \citet{Kolsrud_et_al_2020_JPUB}. Our baseline sample focuses on singles, which avoids complications in estimating household income processes. The results are, however, robust to including multi-member households.

The estimates show a steep and convex cash-on-hand gradient. The average yearly MPC is about 0.5, ranging roughly 0.7 in the bottom decile of the cash-on-hand distribution to roughly 0.3 in the top decile, with most of the decline concentrated at low levels of normalized cash-on-hand. Because these spending responses include some durable purchases, such as cars, the magnitudes are not far from what one would expect from a calibrated two-asset buffer-stock savings model, widely used in the macroeconomic literature. When the MPC gradient by cash-on-hand is decomposed into income and liquid financial wealth components, liquid financial wealth emerges as the primary driver.  

The efficiency property of our estimator is essential for detecting the empirical gradient. We show this by estimating the same cash-on-hand gradient in the MPC using the BPP-C estimator. This estimator, as the original BPP estimator, sacrifices efficiency to ensure robustness to measurement error in income and misspecification of the consumption policy. It achieves this by only exploiting the covariance between current consumption growth and distant future income growth, instead of using the full income history as we do. Because of limited persistence in the transitory shocks, the BPP-C estimator is much less precise than our estimator and its bootstrapped standard errors are very large. As a consequence, it fails to detect the gradient in our data. Moreover, while the BPP-C estimates are sensitive to the assumed lag length of the income process, our estimates remain stable and precise. 

We also present results for the consumption response to permanent income shocks. The same logic underpinning the identification and efficiency results for the transitory pass-through coefficient also applies to the permanent-shock pass-through coefficient. Also for this parameter, the best linear projection of the permanent shock on the income history is the Kalman-smoothed permanent shock. Here, we find that the consumption response is flat and centered around one over the cash-on-hand distribution. This pattern is also consistent with the buffer-stock model interpretation.

Broadly, our paper extends the large and long-standing research program on testing models of consumption-savings behavior \citep{ModiglianiBrumberg1954,Friedman1957,Hall1978,Deaton1991}. The evidence, showing a steep liquidity gradient of consumption responses with respect to transitory income shocks, but a flat gradient with respect to permanent shocks, is consistent with the canonical incomplete-markets model used widely in the household finance and macroeconomic literatures. 

More narrowly, we make three distinct contributions. First, in terms of methodology, the paper extends the literature on semi-structural estimators for consumption functions associated with \citet{Blundell_et_al_2008_AER}, \citet{Arellano2017}, \citet{Commault2022}, and \citet{Crawley2023}. We provide a refined methodology for recovering state-dependent pass-through coefficients and MPCs in data with little measurement error in income. Second, it complements natural-experiment evidence on MPCs using tax rebates, lottery wins, and unemployment spells \citep{Johnson_et_al_2006_AER,Parker_2013_AER,Misra2014,Fagareng_et_al_2021_AEJM,Patterson2023,Fagereng2024_et_al_JME,Golosov2024,Boehm2025}. Those designs are powerful, but they often identify responses to particular shocks or among selected populations, and it is not clear to what extent they carry over to other settings. Many of these papers do not detect a clear gradient with respect to liquid assets. In contrast, our estimates use all yearly variation in earnings, taxes, and transfers and we obtain a significant and steep cash-on-hand gradient. Third, the results provide empirical discipline for the liquidity/MPC channel in heterogeneous-agent macroeconomic models (see, e.g., \citet{Kaplan2022} and \citet{Auclert2025}).

The rest of the paper proceeds as follows. Section \ref{sec:identifying_assumptions} derives the pass-through equation and MPC bounds. Section \ref{sec:estimator} presents the estimator and its properties in terms of identification, consistency and efficiency. Section \ref{sec:data} describes the Swedish administrative data. Section \ref{sec:empirical_results} presents the results. The final section concludes.

\section{Identifying Assumptions}
\label{sec:identifying_assumptions}
In this section, we describe our empirical target and the basic assumptions that allow us to estimate the MPC. We first derive a state-dependent pass-through equation as a linear approximation of the canonical buffer-stock consumption model \citep{Deaton1991,Carroll1997}, and then state the additional restrictions that allow us to map a properly estimated consumption effect of transitory income innovations into MPC bounds. Our approach builds on the linear approximation in \citet{Blundell_Low_Preston_2013_QE}. The purpose of grounding the pass-through equation in the buffer-stock consumption model is not to impose a parametric consumption rule. It is to discipline which state variables are needed to interpret the local response to a new income innovation.

\begin{assumption}[Income process]
\label{ass:income_process}
For individual $i$ in period $t$, income ($Y_{i,t}$) is the product of a deterministic component ($Y^d_{i,t}$), a permanent stochastic component ($P_{i,t}$), and a transitory stochastic component ($\exp(\nu_{i,t})$):
\begin{align}
    Y_{i,t} &= Y^d_{i,t} P_{i,t}\exp(\nu_{i,t}), \label{eq:income_process_level} \\
    P_{i,t} &= P_{i,t-1}\exp(\eta_{i,t}), \label{eq:permanent_process} \\
    \nu_{i,t} &= \sum_{j=0}^{k}\theta_j\varepsilon_{i,t-j}, \qquad \theta_0=1. \label{eq:transitory_process}
\end{align}
The permanent innovation ($\eta_{i,t}$) and transitory innovation ($\varepsilon_{i,t}$) are mean zero with finite variances, the transitory component follows an MA($k$) process with coefficients $\theta_j$, and the income-process parameters are common across households.
\end{assumption}

Assumption \ref{ass:income_process} follows the permanent-transitory representation used in the semi-structural MPC literature. Permanent shocks shift the level of future income, while transitory shocks have an MA$(k)$ effect on income. The common-parameter restriction is what makes it possible to estimate the income process from pooled income-growth moments and then recover household-level shocks.

\begin{assumption}[Budget set]
\label{ass:budget_set}
Households allocate current cash-on-hand ($M_{i,t}$) between consumption ($C_{i,t}$) and next-period assets ($A_{i,t+1}$):\footnote{Here we handle the interest rate $r$ as constant. Note that all aggregate time-varying features, including the interest rate that in theory should have a (minor) impact on the MPC, are neutralized in the empirical work through year fixed effects and the fact that our deciles are defined through year-specific cash-on-hand boundaries. Thus, our estimates of each decile-specific MPC capture the average MPC for each decile across all years in our data.}
\begin{align}
    C_{i,t}+A_{i,t+1} &= M_{i,t}, \label{eq:budget_constraint} \\
    M_{i,t} &= (1+r)A_{i,t}+Y_{i,t}. \label{eq:cash_on_hand}
\end{align}
\end{assumption}

\begin{assumption}[Consumption behavior]
\label{ass:consumption_behavior}
Consumption is given by a normalized consumption policy ($c$):
\begin{equation}
    C_{i,t}=P_{i,t}c(m_{i,t},\mathbf{s}_{i,t})
    \label{eq:consumption_policy}
\end{equation}
where
\begin{equation}
    c(m_{i,t},\mathbf{s}_{i,t})\equiv \frac{C_{i,t}}{P_{i,t}},
    \qquad 
    m_{i,t} \equiv  \frac{M_{i,t}}{P_{i,t}},
    \qquad
    \mathbf{s}_{i,t} \equiv  (\varepsilon_{i,t},\ldots,\varepsilon_{i,t-k+1}).
    \label{eq:state_variables}
\end{equation}
\end{assumption}

This consumption policy is the standard implication of a buffer-stock savings model with CRRA preferences, a linear budget constraint, and borrowing constraints, as in \cite{Deaton1991} and \cite{Carroll1997}. Consumption scales linearly with permanent income, holding the state variables constant. The state variables are cash-on-hand normalized by permanent income ($m_{i,t}$), and the transitory-income state vector ($\mathbf{s}_{i,t}$) contains the recent transitory innovations that help forecast future income.

In a two-asset model with both liquid wealth and illiquid wealth \citep[as in][]{Kaplan_Violante2014}, commonly used in the modern macroeconomic literature, the state space is larger. In these settings, households also decide whether to adjust illiquid positions, and consumption responses are typically very different conditional on adjustment relative to non-adjustment. Our empirical implementation deals with this by focusing on periods without housing transactions, which is the main component of illiquid wealth in Sweden. 
In the extended two-asset version of the Deaton-Carroll buffer-stock consumption model, the relevant state variables conditional on not adjusting the illqiuid position are still normalized cash-on-hand and the transitory income state vector, as assumed here.

Combining Assumptions \ref{ass:income_process}--\ref{ass:consumption_behavior} gives the pass-through equation. Taking logs and first-differencing the consumption policy gives
\begin{equation}
    \Delta \log C_{i,t}
    =
    \eta_{i,t}
    +\log c(m_{i,t},\mathbf{s}_{i,t})
    -\log c(m_{i,t-1},\mathbf{s}_{i,t-1}).
    \label{eq:consumption_growth_policy}
\end{equation}
The budget constraint implies the law of motion
\begin{equation}
    m_{i,t}
    =
    \frac{1+r}{\exp(\eta_{i,t})}
    \left(m_{i,t-1}-c(m_{i,t-1},\mathbf{s}_{i,t-1})\right)
    +Y^d_{i,t}\exp(\nu_{i,t}).
    \label{eq:normalized_cash_on_hand_lom}
\end{equation}
Conditioning on the predetermined observed state $(m_{i,t-1},\mathbf{s}_{i,t-1})$ and linearizing with respect to the new innovations $(\eta_{i,t},\varepsilon_{i,t})$ gives the following result.

\begin{proposition}[Pass-through equation]
\label{prop:passthrough}
Under Assumptions \ref{ass:income_process}--\ref{ass:consumption_behavior}, the first-order approximation to consumption growth around $\eta_{i,t}=\varepsilon_{i,t}=0$ is
\begin{equation}
    \Delta \log C_{i,t}
    \approx
    \lambda(m_{i,t-1},\mathbf{s}_{i,t-1})\eta_{i,t}
    +
    \gamma(m_{i,t-1},\mathbf{s}_{i,t-1})\varepsilon_{i,t}.
    \label{eq:passthrough_equation}
\end{equation}
The coefficient $\lambda$ is the pass-through of permanent income shocks. The coefficient $\gamma$ is the pass-through of transitory income shocks and can be written as
\begin{equation}
    \gamma(m_{i,t-1},\mathbf{s}_{i,t-1})
    =
    Y^d_{i,t}\exp(\nu^0_{i,t})
    \frac{\partial \log c(m_{i,t},\mathbf{s}_{i,t})}{\partial m_{i,t}}
    +
    \frac{\partial \log c(m_{i,t},\mathbf{s}_{i,t})}{\partial s^{(1)}_{i,t}},
    \label{eq:gamma_cash_news}
\end{equation}
where $\nu^0_{i,t}\equiv\sum_{j=1}^{k}\theta_j\varepsilon_{i,t-j}$ is the predetermined transitory component and $s^{(1)}_{i,t}\equiv\varepsilon_{i,t}$ is the current transitory innovation.\footnote{We have suppressed the dependence of $\gamma$ on $Y^d_{i,t}$ here as we remove the deterministic part of ("residualize") log income and log consumption in our empirical application.}
\end{proposition}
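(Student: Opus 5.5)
The plan is to treat the right-hand side of \eqref{eq:consumption_growth_policy} as a function of the two new innovations $(\eta_{i,t},\varepsilon_{i,t})$, holding fixed the predetermined objects $(m_{i,t-1},\mathbf{s}_{i,t-1})$ and the past innovations $\varepsilon_{i,t-1},\ldots,\varepsilon_{i,t-k}$. I will then take a first-order Taylor expansion around $\eta_{i,t}=\varepsilon_{i,t}=0$. The new innovations enter through three channels. The first is the direct term $\eta_{i,t}$. The second is $m_{i,t}$, via the law of motion \eqref{eq:normalized_cash_on_hand_lom}. The third is the transitory state $\mathbf{s}_{i,t}=(\varepsilon_{i,t},\varepsilon_{i,t-1},\ldots,\varepsilon_{i,t-k+1})$, whose first entry is $s^{(1)}_{i,t}=\varepsilon_{i,t}$ and whose remaining entries are predetermined. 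The term $-\log c(m_{i,t-1},\mathbf{s}_{i,t-1})$ is fixed. I assume $c$ is differentiable in its arguments at the expansion point, which is needed for the linearization to make sense.

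The key steps run in this order.

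\emph{Step 1: the zero-order term.} I write $\nu_{i,t}=\varepsilon_{i,t}+\nu^0_{i,t}$, so that
\[
m_{i,t}=(1+r)e^{-\eta_{i,t}}\bigl(m_{i,t-1}-c(m_{i,t-1},\mathbf{s}_{i,t-1})\bigr)+Y^d_{i,t}e^{\nu^0_{i,t}}e^{\varepsilon_{i,t}}.
\]
At the expansion point, $\Delta\log C_{i,t}$ equals $\log c(m^0_{i,t},\mathbf{s}^0_{i,t})-\log c(m_{i,t-1},\mathbf{s}_{i,t-1})$, where $m^0$ and $\mathbf{s}^0$ are the values of $m_{i,t}$ and $\mathbf{s}_{i,t}$ at zero innovations. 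This is a function of the lagged state only. In line with the footnote on residualization, I absorb it into the deterministic part removed empirically, or equivalently interpret \eqref{eq:passthrough_equation} as describing deviations from the conditional expectation given the predetermined state. I will state this explicitly, because the "$\approx$" in the proposition hides this intercept.

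\emph{Step 2: the transitory derivative.} Differentiating with respect to $\varepsilon_{i,t}$ gives $\partial m_{i,t}/\partial\varepsilon_{i,t}=Y^d_{i,t}e^{\nu^0_{i,t}}$ at the expansion point. The chain rule then yields $\gamma$ exactly as in \eqref{eq:gamma_cash_news}: a cash-on-hand term plus the news term $\partial\log c/\partial s^{(1)}$. Both partial derivatives are evaluated at $(m^0_{i,t},\mathbf{s}^0_{i,t})$, so they are functions of $(m_{i,t-1},\mathbf{s}_{i,t-1})$ together with $\varepsilon_{i,t-k}$, and of $Y^d_{i,t}$, whose dependence is suppressed as the footnote states.

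\emph{Step 3: the permanent derivative.} Differentiating with respect to $\eta_{i,t}$ gives
\[
\lambda=1-(1+r)\bigl(m_{i,t-1}-c(m_{i,t-1},\mathbf{s}_{i,t-1})\bigr)\frac{\partial\log c(m_{i,t},\mathbf{s}_{i,t})}{\partial m_{i,t}}.
\]
The $1$ comes from the direct term, and the second term comes from the normalization effect on $m_{i,t}$. This is again a function of the predetermined state.

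The main obstacle is bookkeeping about what "predetermined state" means. The coefficients formally depend on $\varepsilon_{i,t-k}$, which enters $\nu^0_{i,t}$ but is not an entry of $\mathbf{s}_{i,t-1}$ as defined in \eqref{eq:state_variables}. They also depend on $Y^d_{i,t}$. My first approach is to note that $\nu^0_{i,t}$ is known at $t-1$, and to either absorb it into the conditioning set or suppress it alongside $Y^d_{i,t}$, consistent with the notation $\gamma(m_{i,t-1},\mathbf{s}_{i,t-1})$. The remaining calculations are routine chain-rule differentiation.
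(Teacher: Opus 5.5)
Your proposal is correct and is essentially the paper's proof: fix $(m_{i,t-1},\mathbf{s}_{i,t-1})$, compute $\partial m_{i,t}/\partial\eta_{i,t}=-(1+r)(m_{i,t-1}-c(m_{i,t-1},\mathbf{s}_{i,t-1}))$ and $\partial m_{i,t}/\partial\varepsilon_{i,t}=Y^d_{i,t}e^{\nu^0_{i,t}}$, and apply the chain rule to get the same $\lambda$ and $\gamma$; your explicit treatment of the zero-order intercept is a point the paper leaves implicit. Your bookkeeping worry is an off-by-one slip: since $\mathbf{s}_{i,t}=(\varepsilon_{i,t},\ldots,\varepsilon_{i,t-k+1})$, the lagged state $\mathbf{s}_{i,t-1}=(\varepsilon_{i,t-1},\ldots,\varepsilon_{i,t-k})$ already contains $\varepsilon_{i,t-k}$, so $\nu^0_{i,t}$ and $\mathbf{s}^0_{i,t}$ are functions of $\mathbf{s}_{i,t-1}$ and only the dependence on $Y^d_{i,t}$ needs to be suppressed.
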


\noindent\textit{Proof:} See Appendix Section~\ref{app:identification_proofs}.

Equation \eqref{eq:passthrough_equation} is a state-dependent generalization of the consumption-growth equation in \cite{Blundell_et_al_2008_AER}. In their specification, log consumption growth similarly depends linearly on permanent and transitory income shocks, assumed to be common for the entire population. Here the same two shocks enter consumption growth, but the coefficients are functions of the household's predetermined state. Note that the pass-through coefficients $\gamma$ and $\lambda$ are elasticities, and not dollar-to-dollar MPCs. The conversion factor between the two is the consumption-to-income ratio $\frac{C_{it}}{Y_{it}}$.

The expression for $\gamma$ in Equation \eqref{eq:gamma_cash_news} illustrates that this pass-through coefficient is a composite of two effects. The first term is the cash-on-hand channel: a transitory innovation raises current resources. The second term is a news channel: the same innovation may also change expected future income through the MA component. Estimating $\gamma$ identifies the total response to the transitory innovation. To recover the latent information in this coefficient about the response as an MPC with respect to current cash-on-hand, we need to bound the importance of the news channel. We do this under the assumption that increased income today does not decrease income in the future,     $\left(\sum_{j=1}^{k}\theta_j\right)\ge 0$.

\begin{assumption}[Bounded response to news]
\label{ass:bounded_news}
For $\left(\sum_{j=1}^{k}\theta_j\right) \ge 0$, for all states $(m_{i,t},\mathbf{s}_{i,t})$, 
\begin{equation}
    0
    \le
    \frac{\partial \log c(m_{i,t},\mathbf{s}_{i,t})}{\partial s^{(1)}_{i,t}}
    \le
    \left(\sum_{j=1}^{k}\theta_j\right)
    Y^d_{i,t}\exp(\nu^0_{i,t})
    \frac{\partial \log c(m_{i,t},\mathbf{s}_{i,t})}{\partial m_{i,t}}.
    \label{eq:bounded_news}
\end{equation}
\end{assumption}

Assumption \ref{ass:bounded_news} states that good news about future income does not raise current consumption more than receiving the corresponding income today. This is a weak implication of standard consumption-savings logic: cash-on-hand can be consumed immediately, while news about future income must be smoothed through the intertemporal budget constraint. The assumption becomes less restrictive when the transitory component is only weakly persistent, because then the news channel is small.

Using the conversion factor together with Assumption \ref{ass:bounded_news}, we arrive at the following bounds for the MPC with respect to changes in current cash-on-hand. 

\begin{proposition}[Bounds on the MPC]
\label{prop:mpc_bounds}
Under Assumptions \ref{ass:income_process}--\ref{ass:bounded_news}, the MPC with respect to current cash-on-hand satisfies
\begin{equation}
    \frac{1}{1+\sum_{j=1}^{k}\theta_j}
    \frac{C_{i,t}}{Y_{i,t}}
    \gamma(m_{i,t-1},\mathbf{s}_{i,t-1})
    \le
    MPC_{i,t}
    \le
    \frac{C_{i,t}}{Y_{i,t}}
    \gamma(m_{i,t-1},\mathbf{s}_{i,t-1}).
    \label{eq:mpc_bounds}
\end{equation}
\end{proposition}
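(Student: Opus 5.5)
The plan is to combine the decomposition of $\gamma$ in Proposition~\ref{prop:passthrough} with the two-sided restriction of Assumption~\ref{ass:bounded_news}. This bounds the cash-on-hand elasticity of consumption from above and below by multiples of $\gamma$; converting that elasticity into a dollar-for-dollar MPC then gives \eqref{eq:mpc_bounds}.

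\textbf{Step 1: define the target.} Let
\[
\text{MPC}_{i,t}\equiv \frac{\partial C_{i,t}}{\partial M_{i,t}}=\frac{\partial c(m_{i,t},\mathbf{s}_{i,t})}{\partial m_{i,t}},
\]
using $C=Pc(M/P,\mathbf{s})$ with $P_{i,t}$ and $\mathbf{s}_{i,t}$ held fixed. Then write
\[
\text{MPC}_{i,t}=c\,\frac{\partial\log c}{\partial m}=\frac{C_{i,t}}{P_{i,t}}\frac{\partial \log c}{\partial m}.
\]

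\textbf{Step 2: introduce the cash-on-hand term.} Define
\[
x\equiv Y^d_{i,t}\exp(\nu^0_{i,t})\frac{\partial\log c}{\partial m}.
\]
At the expansion point $\varepsilon_{i,t}=\eta_{i,t}=0$ we have $\nu_{i,t}=\nu^0_{i,t}$. Hence $Y^d_{i,t}\exp(\nu^0_{i,t})=Y_{i,t}/P_{i,t}$, and
\[
x=\frac{Y_{i,t}}{P_{i,t}}\frac{\partial\log c}{\partial m}=\frac{Y_{i,t}}{C_{i,t}}\,\text{MPC}_{i,t}.
\]
This is the step that carries the conversion factor $C_{i,t}/Y_{i,t}$ mentioned after Proposition~\ref{prop:passthrough}. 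The only care needed is to evaluate everything at the approximation point, so that $Y_{i,t}$ is the income level consistent with $\nu^0_{i,t}$.

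\textbf{Step 3: bound $x$ by $\gamma$.} Proposition~\ref{prop:passthrough} gives $\gamma=x+\partial\log c/\partial s^{(1)}$. Assumption~\ref{ass:bounded_news} says the news term lies in $[0,\Theta x]$ with $\Theta\equiv\sum_{j=1}^k\theta_j\ge0$. Therefore
\[
x\le\gamma\le(1+\Theta)x .
\]
Since $1+\Theta>0$, this rearranges to
\[
\frac{\gamma}{1+\Theta}\le x\le\gamma .
\]
Note that Assumption~\ref{ass:bounded_news} itself forces $x\ge0$ whenever $\Theta>0$. If $\Theta=0$, the news term is zero and the bounds collapse to the equality $x=\gamma$. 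So no sign issues arise in either case.

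\textbf{Step 4: convert to the MPC.} Multiply the bounds on $x$ by $C_{i,t}/Y_{i,t}>0$. By Step 2, $x\,C_{i,t}/Y_{i,t}=\text{MPC}_{i,t}$, which yields exactly \eqref{eq:mpc_bounds}.

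I expect no real obstacle. The main care points are the identification of $Y^d\exp(\nu^0)$ with $Y/P$ in Step 2, and keeping positivity of $C_{i,t}$ and $Y_{i,t}$ explicit so the inequality directions are preserved.
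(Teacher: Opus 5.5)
Your proposal is correct and matches the paper's proof step for step. The paper defines the cash-on-hand channel term $\Xi_{i,t}$ (your $x$), uses Assumption~\ref{ass:bounded_news} to get $\Xi_{i,t}\le\gamma\le(1+\sum_{j=1}^{k}\theta_j)\Xi_{i,t}$, shows $\frac{C_{i,t}}{Y_{i,t}}\Xi_{i,t}=\partial c/\partial m_{i,t}$ at the linearization point, and multiplies through; your remarks on positivity of $C_{i,t}/Y_{i,t}$ and on evaluating $Y_{i,t}$ at $\nu_{i,t}=\nu^0_{i,t}$ spell out what the paper uses implicitly.
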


\noindent\textit{Proof:} See Appendix Section~\ref{app:identification_proofs}.

Proposition \ref{prop:passthrough} and Proposition \ref{prop:mpc_bounds} guide the empirical analysis. The empirical section estimates $\gamma$ as a function of the predetermined state and then converts it into MPC bounds using \eqref{eq:mpc_bounds}. In our application, the estimated persistence of transitory income is modest, so the lower and upper bounds are tight.

\section{Estimation}
\label{sec:estimator}

To recover the cash-on-hand gradient in the MPC, we need to estimate the state-dependent pass-through coefficient $\gamma$ in Equation \eqref{eq:passthrough_equation}. This section describes how we estimate it. 

The key premise of our approach is that income is measured with little error. This premise is justified in administrative data, where the main income components are tax-based and third-party reported. It is less suitable for survey data, for which canonical estimators were originally developed, such as BPP and the refined BPP-C. Those estimators sacrifice precision in order to remain robust to income measurement error. We instead use the full income history to recover the latent shocks as precisely as possible. 

Our estimation procedure has three steps. First, we estimate the common income process from pooled income-growth moments. Second, we recover household-level permanent shocks $\widehat{\eta}_{i,t}$ and transitory shocks $\widehat{\varepsilon}_{i,t}$ using the Kalman smoother based on household income histories. Third, we estimate the empirical counterpart of the theoretical pass-through equation using the recovered shocks, within cells of the predetermined state variables:
\begin{equation}
    \Delta \log C_{i,t}
    =
    \mu
    +
    b_{\eta}(\cdot)\widehat{\eta}_{i,t}
    +
    b_{\varepsilon}(\cdot)\widehat{\varepsilon}_{i,t}
    +
    u_{i,t}.
    \label{eq:second_stage_estimation}
\end{equation}

Equation \eqref{eq:second_stage_estimation} is written with generic population regression coefficients $b_{\eta}(\cdot)$ and $b_{\varepsilon}(\cdot)$. The residual $u_{i,t}$ captures variation in consumption behavior orthogonal to the income-history signal, such as taste shocks and independent measurement error in consumption. 

There is no mechanical reason why the coefficient $b_{\varepsilon}(\cdot)$ in the estimating equation \eqref{eq:second_stage_estimation} needs to inform us about the theoretical pass-through coefficient $\gamma(\cdot)$ in Equation \eqref{eq:passthrough_equation}. Here, the regressors are recovered signals rather than the latent shocks themselves. Using the fact that the Kalman smoother provides the best linear projection of the latent shocks, we prove identification and efficiency under the assumption of no measurement error in income or the underlying state variables. We then show that the bias resulting from relevant magnitudes of measurement errors in income, or estimated state variables, are small.

\subsection{Income Process Parameters}
The first step of our procedure is to estimate the income process parameters. We estimate $(\theta_1,\ldots,\theta_k,\sigma^2_\varepsilon,\sigma^2_\eta)$ from pooled autocovariances of residual income growth, following standard practice. An MA$(k)$ transitory component in income levels implies an MA$(k+1)$ component in income growth:
\begin{equation}
    \Delta \log Y_{i,t}
    =
    \eta_{i,t}
    +
    \sum_{j=0}^{k+1}\psi_j\varepsilon_{i,t-j},
    \label{eq:income_growth_ma}
\end{equation}
where $\psi_0=1$, $\psi_j=\theta_j-\theta_{j-1}$ for $1\leq j\leq k$, and $\psi_{k+1}=-\theta_k$. Let $a_{\ell} \equiv \operatorname{Cov}(\Delta\log Y_{i,t+\ell},\Delta\log Y_{i,t})$ denote the autocovariance of income growth at lag $\ell$. The autocovariances satisfy
\begin{align}
    a_0
    &=
    \sigma_\eta^2
    +
    \sigma_\varepsilon^2\sum_{j=0}^{k+1}\psi_j^2,
    \label{eq:income_autocov_variance}\\
    a_{\ell}
    &=
    \sigma_\varepsilon^2
    \sum_{j=\ell}^{k+1}\psi_j\psi_{j-\ell},
    \qquad
    \ell=1,\ldots,k+1.
    \label{eq:income_autocov_lags}
\end{align}
We choose the values of $(\theta_1,\ldots,\theta_k,\sigma^2_\varepsilon)$ that match the nonzero autocovariances $a_1,\ldots,a_{k+1}$, and then recover $\sigma_\eta^2$ from the variance identity \eqref{eq:income_autocov_variance}.

\subsection{Kalman smoothing}

The second step is to recover the latent shocks. Conditional on the income process parameters, we write \eqref{eq:income_growth_ma} as a standard state-space model with state vector
\begin{equation}
    \boldsymbol{\alpha}_{i,t}
    =
    (\eta_{i,t},\varepsilon_{i,t},\varepsilon_{i,t-1},\ldots,\varepsilon_{i,t-k-1})'.
    \label{eq:state_vector_estimation}
\end{equation}
and denote each household income history with
\[
    \mathbf{y}_i
    =
    (\Delta\log Y_{i,1},\ldots,\Delta\log Y_{i,T_i})'.
\]

We run a Kalman filter followed by a fixed-interval smoother. Appendix~\ref{app:kalman_smoother} gives the state-space implementation details. The filter forms one-sided predictions using income growth up to date $t$. The smoother uses the full income path, including future income growth, to decompose observed income changes into permanent and transitory shocks.

The recovered shocks form the best linear projection of the latent state on $\mathbf{y}_i$. By best linear projection, we mean that $\operatorname{Proj}(X\mid \mathbf{Z})=\operatorname{Cov}(X,\mathbf{Z})
\operatorname{Var}(\mathbf{Z})^{-1}\mathbf{Z}$, given that $X$ and $\mathbf{Z}$ have been centered. One may therefore interpret the recovered shocks as the permanent and current-transitory components of this projection:
\begin{equation}
    (\widehat{\eta}_{i,t},\widehat{\varepsilon}_{i,t})
    =
    \operatorname{Proj}\left((\eta_{i,t},\varepsilon_{i,t})
    \mid
    \mathbf{y}_i\right).
    \label{eq:smoothed_shocks}
\end{equation}
This interpretation is valid for any distribution of the latent shocks with finite variances and serially uncorrelated innovations. This is reassuring, because household income changes are in general not well described by a normal distribution, as highlighted by the recent literature on income process estimation, see, e.g., \cite{Guvenen2021}. Our estimator only uses the covariance structure of the income process, and not the full distribution of income shocks. With Gaussian shocks, however, the interpretation of the recovered shocks strengthens: the recovered shocks are in that case conditional expectations, not merely best linear predictions.

Figure \ref{fig:transitory_permanent_footprints} illustrates the logic of the Kalman smoother, interpreted as a model-based event-study filter. Around each date, the income path is compared to the dynamic footprints of permanent and transitory shocks. A permanent shock leaves a persistent change in income. A transitory shock leaves the MA footprint (a ``spike'') implied by the estimated $\theta$'s. The smoother assigns the date-$t$ shock the best linear prediction implied by these footprints, using all available dates rather than a single future-income moment. The same observed income movement can be decomposed differently depending on what happens before and after the date of interest, and thus both past and future income growth are essential to forming the prediction. 

\begin{figure}[t]
    \centering
    \caption{Dynamic footprints of income shocks}
    \label{fig:transitory_permanent_footprints}
    \includegraphics[width=0.86\textwidth]{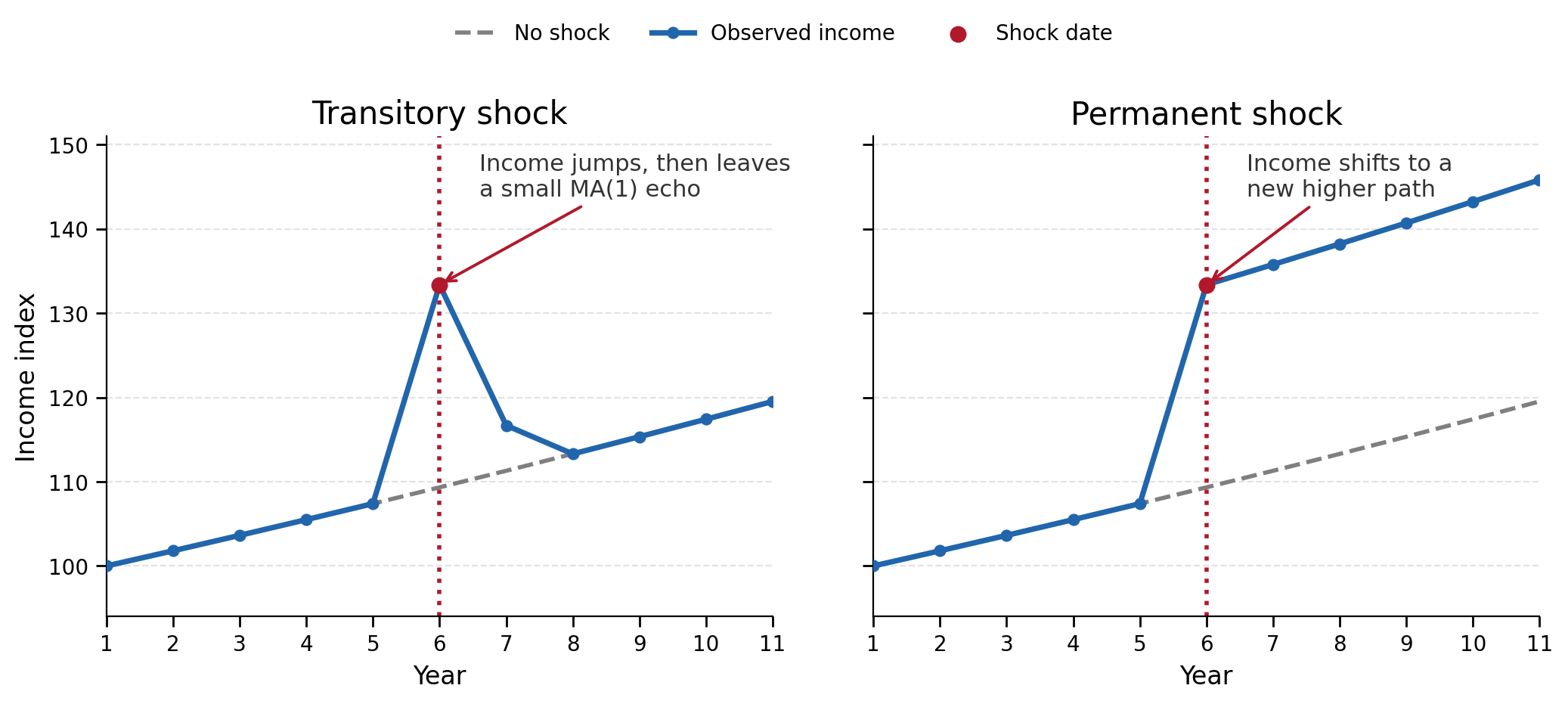}
    \figurenote{The transitory shock is illustrated with a small MA(1) component, calibrated to the persistence estimated in our baseline income process.}
\end{figure}

\subsection{Identification}

The recovered shock is a generated regressor. The key projection identity in Equation \eqref{eq:smoothed_shocks} is nevertheless enough for the population coefficient in \eqref{eq:second_stage_estimation} to identify the pass-through coefficient $\gamma$, provided that the pass-through equation is correctly specified after conditioning on the true state, the permanent shock is controlled for, and the residual consumption-growth component is orthogonal to the income-history signal.

\begin{proposition}[Identification of the pass-through coefficient]
\label{prop:passthrough_identification}
Let the state-dependent pass-through equation hold within a state cell, or after partialling out state controls, and suppose that income is measured without error. Suppose also that the residual component $\omega_{i,t}$ is orthogonal to the income-history signal after conditioning on the state and the permanent-shock component. Let
\[
    \widehat{\varepsilon}_{i,t}
    =
    \operatorname{Proj}(\varepsilon_{i,t}\mid \mathbf{y}_i),
    \qquad
    \mathbf{y}_i=(\Delta\log Y_{i,1},\ldots,\Delta\log Y_{i,T_i})'.
\]
Then the population coefficient $b_{\varepsilon}(\cdot)$ in Equation \eqref{eq:second_stage_estimation} equals $\gamma$ in Equation \eqref{eq:passthrough_equation}.
\end{proposition}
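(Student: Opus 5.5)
The plan is to write consumption growth within a state cell, or after partialling out state controls, as the pass-through equation plus a residual,
\[
\Delta\log C_{i,t}=\mu+\lambda\eta_{i,t}+\gamma\varepsilon_{i,t}+\omega_{i,t},
\]
and then compute the population regression of $\Delta\log C_{i,t}$ on $(\widehat{\eta}_{i,t},\widehat{\varepsilon}_{i,t})$. The coefficients $\lambda,\gamma$ are constant within the cell because they depend only on the predetermined state $(m_{i,t-1},\mathbf{s}_{i,t-1})$. By the Frisch--Waugh logic, the coefficient vector is $\operatorname{Var}(\widehat{\mathbf{x}})^{-1}\operatorname{Cov}(\widehat{\mathbf{x}},\Delta\log C_{i,t})$, where $\widehat{\mathbf{x}}=(\widehat{\eta}_{i,t},\widehat{\varepsilon}_{i,t})'$. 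The argument therefore reduces to evaluating the covariance of each regressor with each of the three terms $\lambda\eta$, $\gamma\varepsilon$ and $\omega$.

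The key step is the projection identity in Equation \eqref{eq:smoothed_shocks}. Write $\widehat{\mathbf{x}}=B\mathbf{y}_i$ with $B=\operatorname{Cov}(\mathbf{x},\mathbf{y}_i)\operatorname{Var}(\mathbf{y}_i)^{-1}$ and $\mathbf{x}=(\eta_{i,t},\varepsilon_{i,t})'$. The projection error $\mathbf{x}-\widehat{\mathbf{x}}$ is uncorrelated with every linear function of $\mathbf{y}_i$, and in particular with $\widehat{\mathbf{x}}$ itself. Since income is measured without error, $\mathbf{y}_i$ is exactly the observed history, so no noise term enters. It follows that
\[
\operatorname{Cov}(\widehat{\mathbf{x}},\mathbf{x})=\operatorname{Cov}(\widehat{\mathbf{x}},\widehat{\mathbf{x}}+(\mathbf{x}-\widehat{\mathbf{x}}))=\operatorname{Var}(\widehat{\mathbf{x}}).
\]
Hence $\operatorname{Cov}(\widehat{\mathbf{x}},\lambda\eta_{i,t}+\gamma\varepsilon_{i,t})=\operatorname{Var}(\widehat{\mathbf{x}})(\lambda,\gamma)'$. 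The remaining term vanishes, $\operatorname{Cov}(\widehat{\mathbf{x}},\omega_{i,t})=0$, because $\widehat{\mathbf{x}}$ is a linear function of $\mathbf{y}_i$ and $\omega$ is assumed orthogonal to the income-history signal given the state and the permanent-shock component. Substituting both into the coefficient formula gives $(b_\eta,b_\varepsilon)=(\lambda,\gamma)$, so $b_\varepsilon=\gamma$.

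The main obstacle is making the conditioning precise. The projection coefficients $B$ are computed from pooled, unconditional covariances of the common income process. The regression, however, is run within a cell of $(m_{i,t-1},\mathbf{s}_{i,t-1})$, and cash-on-hand is itself a function of past shocks. The within-cell projection of $\mathbf{x}$ on $\mathbf{y}_i$ therefore need not coincide with the pooled Kalman projection. I would handle this by invoking the hypothesis "after partialling out state controls" literally. Within the cell, every covariance is computed on the residualized variables, and $\widehat{\mathbf{x}}$ is taken to be the projection that satisfies the orthogonality $\operatorname{Cov}(\mathbf{x}-\widehat{\mathbf{x}},\mathbf{y}_i\mid\text{cell})=0$. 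This is what the proposition's statement of the pass-through equation "within a state cell" and the orthogonality assumption are meant to license.

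If the cell-conditional projection differs from the pooled one, the argument still goes through, with one added requirement, which I would state explicitly: the projection error must remain uncorrelated with $\widehat{\mathbf{x}}$ within the cell. I would argue this holds approximately, since the cell is defined by predetermined variables that are themselves linear functions of past income (up to the linearization), so conditioning on them is absorbed by the projection on $\mathbf{y}_i$. I would also check that $\operatorname{Var}(\widehat{\mathbf{x}})$ is nonsingular; this holds because permanent and transitory shocks leave distinct footprints on $\mathbf{y}_i$ when $\sigma^2_\eta,\sigma^2_\varepsilon>0$.
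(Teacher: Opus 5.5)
Your proof is correct and uses the same two ingredients as the paper's: the projection error is orthogonal to every linear function of $\mathbf{y}_i$, so $\operatorname{Cov}(\widehat{\mathbf{x}},\mathbf{x})=\operatorname{Var}(\widehat{\mathbf{x}})$, and the residual is orthogonal to income-history signals; treating $(\widehat{\eta}_{i,t},\widehat{\varepsilon}_{i,t})$ jointly is a tidier version of the paper's scalar ``partial out the permanent shock'' step, because it makes explicit that the control must be the recovered $\widehat{\eta}_{i,t}$ (itself linear in $\mathbf{y}_i$) rather than the latent $\eta_{i,t}$. The within-cell conditioning issue you raise is one the paper simply builds into its hypotheses, so stating within-cell orthogonality as an explicit assumption is the right fix, but you should drop the heuristic that it holds approximately, since the true state involves the latent $\varepsilon_{i,t-1}$ and wealth (neither linear in $\mathbf{y}_i$), and binning is a nonlinear conditioning under which linear-projection orthogonality need not survive outside the Gaussian case.
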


\noindent\textit{Proof:} See Appendix Section~\ref{app:minvar_signal}.

The intuition is that the shock obtained from the Kalman smoother is a noisy signal of the true transitory shock, but the noise is the projection error. By construction, this projection error is orthogonal to the signal itself. Therefore the covariance between the signal and the true shock equals the variance of the signal. Once the permanent shock and the state variables have been controlled for, this is the covariance restriction needed for the regression coefficient on $\widehat{\varepsilon}_{i,t}$ to recover $\gamma$.

The permanent-shock control is essential. If it is omitted, the probability limit of the transitory coefficient contains the omitted-variable term
\begin{equation}
    \gamma
    +
    \lambda
    \frac{\operatorname{Cov}(\widehat{\varepsilon}_{i,t},\eta_{i,t})}
         {\operatorname{Var}(\widehat{\varepsilon}_{i,t})}.
    \label{eq:omitted_permanent_bias}
\end{equation}
The point is not that the underlying shocks are correlated. By assumption, $\eta_{i,t}$ and $\varepsilon_{i,t}$ are orthogonal. The issue is that the estimated signals are both constructed from the same observed income history. The smoothed transitory signal can therefore be correlated with the permanent shock component, even when the latent innovations are orthogonal. Thus the estimator is not a regression of consumption growth on the transitory signal alone. It is a regression of consumption growth on the transitory signal after accounting for the permanent shock and the state variables that define the pass-through coefficient.

\subsection{Efficiency}

The same projection logic also gives an efficiency result. Consider estimators that consistently identify $\gamma$ using signals or instruments that are linear functions of the observed income history, which includes our projection estimator. Within this class, our estimator is the most efficient one.

\begin{proposition}[Minimum-variance linear income-history signal]
\label{prop:estimation_minvar}
Among all valid scalar instruments $Z_{i,t}=\mathbf{a}'\mathbf{y}_i$ that are linear in the income history and identify $\gamma$, the projection signal
\[
    \widehat{\varepsilon}_{i,t}
    =
    \operatorname{Proj}(\varepsilon_{i,t}\mid \mathbf{y}_i)
\]
minimizes the asymptotic variance of the IV estimator of $\gamma$. For any admissible $Z_{i,t}$,
\begin{equation}
    \frac{\operatorname{Avar}(\widehat{\gamma}(Z))}
         {\operatorname{Avar}(\widehat{\gamma}(\widehat{\varepsilon}))}
    =
    \frac{\rho^2(\widehat{\varepsilon}_{i,t},\varepsilon_{i,t})}
         {\rho^2(Z_{i,t},\varepsilon_{i,t})}
    \geq 1.
    \label{eq:minvar_ratio}
\end{equation}
Equality holds if and only if $Z_{i,t}$ is proportional to $\widehat{\varepsilon}_{i,t}$.
\end{proposition}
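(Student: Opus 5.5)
The plan is to compare IV estimators with a common residual variance term and reduce the variance ratio to a ratio of squared correlations. I would work within a state cell, after partialling out the state controls and the permanent-shock component, exactly as in Proposition~\ref{prop:passthrough_identification}. There the outcome can be written as $\Delta\log C_{i,t}=\gamma\varepsilon_{i,t}+\omega_{i,t}$.

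A scalar instrument $Z_{i,t}=\mathbf a'\mathbf y_i$ is \emph{admissible} if it identifies $\gamma$. Following the logic of Proposition~\ref{prop:passthrough_identification}, this means: $Z$ is orthogonal to $\omega$, and the IV moment is $\operatorname{Cov}(Z,\Delta\log C)/\operatorname{Cov}(Z,\varepsilon)$, so the first stage uses the latent $\varepsilon$. Equivalently, the generated-regressor coefficient is the ratio $\operatorname{Cov}(Z,\Delta\log C)/\operatorname{Cov}(Z,\varepsilon)$. Proposition~\ref{prop:passthrough_identification} shows that $\widehat\varepsilon$ attains this ratio, since $\operatorname{Cov}(\widehat\varepsilon,\varepsilon)=\operatorname{Var}(\widehat\varepsilon)$.

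\textbf{Step 1 (asymptotic variance).} Under standard conditions (homoskedastic error $e=\Delta\log C-\gamma\varepsilon$ with respect to the instrument, i.i.d.\ across households), the estimator $\widehat\gamma(Z)$ has
\[
\operatorname{Avar}(\widehat\gamma(Z))=\frac{\sigma_e^2\operatorname{Var}(Z)}{\operatorname{Cov}(Z,\varepsilon)^2}=\frac{\sigma_e^2}{\operatorname{Var}(\varepsilon)\,\rho^2(Z,\varepsilon)}.
\]
Here $\sigma_e^2$ does not depend on $Z$. Taking the ratio for two instruments immediately gives the equality in \eqref{eq:minvar_ratio}. Under heteroskedasticity I would state the result with the homoskedasticity caveat; I expect this to be the main point where the claim needs careful qualification.

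\textbf{Step 2 (the inequality).} Because $\widehat\varepsilon=\operatorname{Proj}(\varepsilon\mid\mathbf y_i)$, the projection error $\varepsilon-\widehat\varepsilon$ is orthogonal to every linear function of $\mathbf y_i$. Hence $\operatorname{Cov}(Z,\varepsilon)=\operatorname{Cov}(Z,\widehat\varepsilon)$ for every $Z=\mathbf a'\mathbf y_i$. This gives
\[
\rho^2(Z,\varepsilon)=\frac{\operatorname{Cov}(Z,\widehat\varepsilon)^2}{\operatorname{Var}(Z)\operatorname{Var}(\varepsilon)}=\rho^2(Z,\widehat\varepsilon)\,\frac{\operatorname{Var}(\widehat\varepsilon)}{\operatorname{Var}(\varepsilon)}=\rho^2(Z,\widehat\varepsilon)\,\rho^2(\widehat\varepsilon,\varepsilon),
\]
where the last equality uses $\operatorname{Cov}(\widehat\varepsilon,\varepsilon)=\operatorname{Var}(\widehat\varepsilon)$. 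By Cauchy--Schwarz, $\rho^2(Z,\widehat\varepsilon)\le1$, so $\rho^2(Z,\varepsilon)\le\rho^2(\widehat\varepsilon,\varepsilon)$. This delivers the inequality $\ge 1$ in \eqref{eq:minvar_ratio}.

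\textbf{Step 3 (equality case).} Equality holds iff $\rho^2(Z,\widehat\varepsilon)=1$. For centered variables this means $Z$ is almost surely proportional to $\widehat\varepsilon$. Conversely, any nonzero multiple of $\widehat\varepsilon$ gives the same IV estimator.

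The key identity, that the projection error is orthogonal to all linear income-history signals, is immediate. The main obstacle is Step 1: justifying that $\sigma_e^2$ is common across instruments, i.e.\ that the residual $e$ does not depend on $Z$ and is homoskedastic with respect to it. I would handle this by working in the partialled-out cell and imposing conditional homoskedasticity as a maintained assumption.
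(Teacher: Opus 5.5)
Your proof is correct and is essentially the paper's argument. Both reduce the Avar ratio to a ratio of squared correlations with the latent $\varepsilon_{i,t}$ and then apply Cauchy--Schwarz; the paper hides the instrument-independent residual variance in its $\propto$, and you are right to tie that step to a homoskedasticity assumption. Your factorization $\rho^2(Z_{i,t},\varepsilon_{i,t})=\rho^2(Z_{i,t},\widehat{\varepsilon}_{i,t})\,\rho^2(\widehat{\varepsilon}_{i,t},\varepsilon_{i,t})$ is the coordinate-free form of the paper's inequality $(\mathbf{a}'\mathbf{c}_t)^2\le(\mathbf{a}'\Sigma_y\mathbf{a})(\mathbf{c}_t'\Sigma_y^{-1}\mathbf{c}_t)$, and it additionally gives the efficiency loss exactly as $1/\rho^2(Z_{i,t},\widehat{\varepsilon}_{i,t})$.
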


\noindent\textit{Proof:} See Appendix Section~\ref{app:minvar_signal}.

The intuition is simple. Precision is governed by how strongly the identifying signal is correlated with the latent transitory shock. The best linear projection is, by definition, the linear combination of the income history that maximizes this correlation, which is the Kalman-smoothed shock. All other valid linear income-history signals use less information, or use the same information with non-optimal weights.

The same logic underpinning our identification and efficiency results for the transitory pass-through coefficient $\gamma$ also applies to the permanent-shock pass-through coefficient $\lambda$. The best linear projection of the permanent shock on the income history is the Kalman-smoothed permanent shock, and all other valid linear income-history signals use less information, or use the same information with non-optimal weights. We focus on the transitory pass-through coefficient in the main text because it is the key object of interest for the MPC gradient.

\subsection{Relation to BPP and BPP-C}

The class of estimators that consistently identify $\gamma$ using linear income-history signals includes most estimators in the literature. In particular, it includes the BPP estimator and the refined BPP-C estimator. The BPP-C estimator is adopted in recent studies using administrative data such as \citet{FlorinOBilbiie2025} and \citet{Ganong2025}.

BPP estimates pass-through from covariance restrictions implied by the permanent-transitory income process. The key idea is to use future income growth to isolate the transitory component of current income growth. Future income growth is orthogonal to the current permanent shock, because a permanent shock changes the level of future income but not future income growth. At the same time, future income growth is correlated with the current transitory shock because the transitory component leaves a finite dynamic footprint in income growth.

In our notation, the BPP instruments are selected elements of the income history,
\[
    \mathbf{Z}^{BPP}_{i,t}
    =
    R_t\mathbf{y}_i
    =
    (\Delta\log Y_{i,t+1},\Delta\log Y_{i,t+2},\ldots,\Delta\log Y_{i,t+q})',
    \qquad
    \mathbf{y}_i=(\Delta\log Y_{i,1},\ldots,\Delta\log Y_{i,T_i})'.
\]
Here $R_t$ is a selection matrix. The moment condition for the transitory pass-through can be written as
\[
    \mathbb{E}\left[
    \mathbf{Z}^{BPP}_{i,t}
    \left(
    \widetilde{\Delta\log C}_{i,t}
    -
    \gamma\varepsilon_{i,t}
    \right)
    \right]
    =
    0,
\]
or equivalently as a set of covariance restrictions using the model-implied vector $\operatorname{Cov}(\mathbf{Z}^{BPP}_{i,t},\varepsilon_{i,t})$, which is why BPP fits into our comparison class. Although their estimator is written as a GMM, every moment is built from a linear function of the income history. With optimal GMM weighting, the estimator uses an effective scalar signal $\mathbf{w}'\mathbf{Z}^{BPP}_{i,t}=\mathbf{w}'R_t\mathbf{y}_i$, which is still linear in $\mathbf{y}_i$.

The virtue of the BPP moments is robustness. By relying on future income growth rather than contemporaneous income growth, the estimator avoids the direct contamination that classical income measurement error creates at date $t$. The cost is that the selected future leads need not be the most informative linear combination of the household's income path. The Kalman-smoothed shock uses the same income process, but combines all dates at which a transitory shock leaves a footprint.

The BPP-C estimator starts from the same logic but asks which future-income moment remains valid when log consumption growth is allowed to also respond (linearly) to past transitory shocks. This is particularly relevant in settings where one cannot condition on the relevant state variables, which is typically the case in small-sample survey data such as the PSID. Without state-dependence, Equation \eqref{eq:passthrough_equation} implies that consumption is a random walk, which is not consistent with standard theory.

If past transitory shocks also affect current consumption growth, some future-income moments are correlated with both the current transitory shock and the past shocks entering consumption growth. The BPP-C estimator therefore selects the future lead that filters out both the permanent shock and the past transitory shocks. With an MA$(k)$ transitory income component, this robust moment uses
\[
    Z^C_{i,t}
    =
    \Delta\log Y_{i,t+k+1}.
\]
This instrument is a single element of the same income-history vector, $Z^C_{i,t}=\boldsymbol{\iota}_{t+k+1}'\mathbf{y}_i$, and is therefore also linear in income history.

Relative to BPP, the virtue of the BPP-C estimator is thus not only robustness to measurement error in income, but also robustness to misspecification of the consumption function. The price of BPP-C's robustness is an even weaker signal: it uses only one future-income realization, namely the most distant realization that still contains information. The cost is that this distant lead can be a weak signal: its identifying covariance is governed by the last MA coefficient, which is small when transitory-income persistence is modest. BPP-C can therefore suffer from a weak-instrument problem, as we illustrate in Section~\ref{sec:empirical_results}.

In contrast, our projection estimator uses the full dynamic footprint of the transitory shock and is more efficient. Moreover, our approach addresses \citet{Commault2022}'s misspecification critique of BPP by allowing for state-dependence. If there are other behavioral channels through which past transitory shocks affect current consumption growth not encapsulated by Assumption~\ref{ass:consumption_behavior}, our approach remains valid as long as these past shocks are also controlled for when estimating Equation \eqref{eq:second_stage_estimation}.\footnote{Maintaining the assumptions of no measurement error in income and a correctly specified state-conditioned consumption equation, it is possible to use near-lead income growth as an instrument together with ex-post bias correction to recover the true pass-through coefficient. We elaborate on this in Appendix Section~\ref{app:bpp_c_near_lead_bias_correction}.}

\subsection{Income measurement error implies a (favorable) bias-variance trade-off}

Propositions \ref{prop:passthrough_identification} and \ref{prop:estimation_minvar} give identification and efficiency assuming no measurement error in income. The administrative data used here likely contains very limited measurement error, but we cannot guarantee it is zero. It may also be interesting to ask what happens when implementing our estimator in settings with substantial income measurement error, such as the PSID and other survey data.

With measurement error in income, the Kalman-smoothed shock is computed from a contaminated income history and the estimator can be biased. Intuitively, with i.i.d. measurement error, the Kalman smoother will not be able to distinguish noise from short-lived transitory income shocks; both would look like spikes in the income growth history, as in the left panel of Figure \ref{fig:transitory_permanent_footprints}. The resulting shock series thus contains noise, and the projection estimator will feature attenuation bias, in contrast to the robust BPP and BPP-C estimators. Nevertheless, the estimator is still very efficient as it uses more information to recover the shocks. 

Appendix \ref{app:measurement_error_simulations} describes two simulation exercises that quantify this trade-off. The first mimics survey data, where income and consumption measurement errors are i.i.d. and independent. In that setting, the BPP-C estimator remains centered on the true pass-through coefficient, while the naive linear-projection estimator is attenuated as income measurement error rises. The second mimics budget-constraint imputation (as in our data), where income measurement error mechanically enters imputed consumption. In that setting, the same error appears in both the income and consumption equations, and the naive linear-projection estimator can be biased upward. Across both simulations, the bias is very small and the confidence intervals for the BPP-C estimator are much larger than for the ``naive'' linear-projection estimator for all reasonable levels of income measurement error. 

\subsection{Measurement error in the state variables implies (small) attenuation bias}
\label{subsec:state_measurement_error}

The preceding identification argument treats the predetermined state variables as observed. In the empirical implementation, however, the state is partly estimated. This matters because the cash-on-hand gradient in the MPC is identified from differences in pass-through coefficients across state cells. Per proposition \ref{prop:passthrough_identification}, noise stemming from the estimation in a state variable does not generally bias the average pass-through coefficient, but it mixes households across true state cells. If the true pass-through coefficient is decreasing in the state, observations assigned to a low-state cell include some households whose true state is higher, and observations assigned to a high-state cell include some households whose true state is lower. This attenuates the estimated gradient toward zero, and that estimated gradient should therefore be interpreted as a conservative lower bound of the true gradient's magnitude.

In this paper, we are primarily interested in how the pass-through coefficient and its associated MPC vary with one of the state variables: normalized cash-on-hand. For normalized cash-on-hand, the measurement-error problem has a useful special structure. Recall that normalized cash-on-hand is defined as $m_{i,t}=\frac{M_{i,t}}{P_{i,t}}$, where $M_{i,t}$ is cash-on-hand and $P_{i,t}$ is permanent income. In our data, the liquid-wealth component of $M_{i,t}$ is observed in the registers, while $P_{i,t}$ is recovered from the estimated income process. After residualizing income (removing $Y^d$), we can write log income as $\log Y_{i,t}=\log P_{i,t}+\nu_{i,t}$, where $\nu_{i,t}$ is the transitory income component. The smoothed permanent-income estimate can similarly be written as $\log \widehat P_{i,t}=\log Y_{i,t}-\widehat\nu_{i,t}$. It follows that
\[
    \log \widehat m_{i,t}-\log m_{i,t}
    =\widehat\nu_{i,t}-\nu_{i,t}.
\]
Thus, the sorting error in normalized cash-on-hand is the recovery error in the transitory income component. 

This identity bounds the magnitude of the problem. Since the Kalman-smoothed transitory component is the linear projection of $\nu_{i,t}$ on the income history, the projection error is orthogonal to the recovered component. In particular, 
\[
    \operatorname{Var}(\widehat\nu_{i,t}-\nu_{i,t})
    =
    \operatorname{Var}(\nu_{i,t})(1-R^2_\nu),
\]
where $R^2_\nu \in [0,1]$ is the reliability of the recovered transitory component. Under the baseline MA(1) specification, $\nu_{i,t}=\varepsilon_{i,t}+\theta_1\varepsilon_{i,t-1}$. The Singles estimates in Table~\ref{tab:income_process_moments} imply $\operatorname{sd}(\nu_{i,t}) \simeq 0.11$; the corresponding number for Households is also about 0.11. This is an upper bound on the log cash-on-hand sorting error before using the information in the household's income history. To interpret the scale, the empirical standard deviation of log normalized cash-on-hand is about 0.32 in both samples, which means that worst case attenuation of a linear slope under classical measurement error is about 11 percent.\footnote{A simple classical-error calculation gives the interpretation. If $\log \widehat m_{i,t}=\log m_{i,t}+e_{i,t}$ with $e_{i,t}$ independent of $\log m_{i,t}$ and $\operatorname{sd}(e_{i,t})\leq 0.11$, then
\[
    \operatorname{Corr}(\log \widehat m_{i,t},\log m_{i,t})
    \geq
    \sqrt{
    \frac{\operatorname{Var}(\log m_{i,t})}
    {\operatorname{Var}(\log m_{i,t})+0.11^2}
    }.
\]
For a linear pass-through gradient in log cash-on-hand, the corresponding upper bound on slope attenuation is
\[
    \frac{\operatorname{Var}(\log m_{i,t})}
    {\operatorname{Var}(\log m_{i,t})+0.11^2}.
\]
These formulas are illustrative because the empirical estimator bins observations into deciles and the Kalman recovery error need not be exactly classical.} The Kalman smoother reduces this error further in proportion to the reliability of the recovered transitory component, meaning that the actual attenuation is likely much smaller.

\section{Data and Measurement}
\label{sec:data}

\subsection{Data Description}

Our empirical analysis uses yearly observations from Swedish administrative registers linked at the individual and household level over 2000--2007. The main source is LISA, which merges tax and transfer registers for the universe of individuals aged 16 and above and provides annual information on labor earnings, capital income, transfers, taxes, and rich socio-demographic characteristics. We link LISA to complementary wealth registers covering financial and real assets. We keep the exposition concise here, and provide all relevant details in Appendix~\ref{app:data}.

We impute consumption from income and wealth using the household budget constraint,
\[
C_{i,t}=Y_{i,t}-S_{i,t},
\]
where $Y_{i,t}$ is disposable income and $S_{i,t}$ captures active saving flows (net asset purchases and net debt repayment). Savings are constructed asset by asset using quantity- and price-based approaches so that passive valuation changes do not mechanically enter consumption, following \cite{Kolsrud_et_al_2020_JPUB} and \cite{BrowningLethPetersen2003}. Appendix~\ref{app:data} reports all details in this procedure and compares the resulting aggregate series to national accounts and survey consumption.

Our measure of disposable income includes all non-capital income, net of taxes and transfers. We do not include capital income, which is endogenous to consumption and saving decisions. We residualize log income and log consumption before constructing growth rates using standard characteristics following \citet{Blundell_et_al_2008_AER}, including time fixed effects.

Wealth is measured in terms of the financial position of households at the end of each calendar year. The wealth register reports only position above 10,000 SEK ($\sim$ 1200 USD). Observations below 10,000 SEK are thus treated as zeros. We measure wealth in three components. \emph{Liquid wealth} ($A^{\mathrm{liq}}_{i,t}$) is the sum of all financial assets, including bank deposits, listed securities, and other marketable financial positions. It does not include the pension claims within the public pension system, as these claims are not accessible prior to retirement. \emph{Net illiquid wealth} ($A^{\mathrm{illiq}}_{i,t}$) is housing and other non-marketable real assets, primarily owner-occupied and co-operative dwellings valued at register prices, net of all debt (which mainly consists of mortgages and government-provided student debt). \emph{Net total wealth} ($A^{\mathrm{net}}_{i,t}$) sums liquid wealth and net illiquid wealth.

We work with two estimation samples. \emph{Singles} is our baseline sample: individuals aged 30--65 who are not married or cohabiting and who satisfy basic data-quality restrictions. \emph{Households} adds married and cohabiting couples observed in the same dwelling. For households, we estimate the income processes on the joint income of the household members, after residualizing with respect to characteristics of the household head. We selected Singles as our baseline sample to avoid any issues with income process estimation that result from residualization and within-household insurance. 

When estimating the second stage consumption function equation \eqref{eq:second_stage_estimation}, both samples exclude individual-year observations with transactions in illiquid assets, that is, housing and co-operative-apartment purchases and sales. We do this because we assume a single liquid asset when deriving our estimating equation. From the perspective of a model that also includes illiquid assets, as in \citet{Kaplan_Violante_2014_Econometrica} and \citet{Bayer_et_al_2019_Econometrica}, our pass-through coefficients should be interpreted as conditional on not adjusting the illiquid account. 

\subsection{Descriptive Statistics}
\label{sec:descriptive_stats}

Table~\ref{tab:summary_stats} reports medians and means for the Households and Singles estimation samples. Monetary values are expressed in 2003 SEK, when 1 SEK $\sim$ 0.12 USD. Average consumption tracks disposable income closely in both samples, which is in line with the very low Swedish aggregate savings rate. Specifically, the aggregate savings rate as a fraction of disposable income was 2\% on average during 1999--2007.\footnote{Note that the aggregate savings rate, unlike our microdata sample, includes the dissaving of retired individuals.} For income, consumption, and wealth, medians are significantly lower than means, reflecting the fat right tails of all distributions.

\begin{table}[t]
    \centering
    \caption{Summary statistics}
    \label{tab:summary_stats}  
    \begin{tabular}{@{}lrrrr@{}}
\toprule
 & \multicolumn{2}{c}{Households} & \multicolumn{2}{c}{Singles} \\
\cmidrule(lr){2-3} \cmidrule(lr){4-5}
 & Median & Mean & Median & Mean \\
\midrule
Share of males (\%) & 48.28 & 48.28 & 57.31 & 57.31 \\
Male age & 44.00 & 45.42 & 45.00 & 45.83 \\
Female age & 48.00 & 47.71 & 52.00 & 49.51 \\
Household size & 1.00 & 1.68 & 1.00 & 1.00 \\
\addlinespace
Disposable income & 188,357 & 215,397 & 164,131 & 171,506 \\
Cash on hand & 224,643 & 1,441,242 & 190,357 & 273,404 \\
Consumption & 192,074 & 244,084 & 163,389 & 192,607 \\
\addlinespace
Net total wealth & 20,424 & 1,499,607 & 8,090 & 317,870 \\
Liquid wealth & 12,536 & 1,278,938 & 8,294 & 148,760 \\
Illiquid wealth & 0 & 173,537 & 0 & 127,667 \\
\bottomrule
\end{tabular}
\par\smallskip
\begin{minipage}{0.82\textwidth}
\footnotesize
\emph{Notes:} Monetary values are in 2003 SEK. Samples pool household-year observations over 2000--2007 (Households: 8,037,370; Singles: 5,631,963). Means are sensitive to top wealth observations.
\end{minipage}

\end{table}

Figure~\ref{fig:wealth_distr} plots the distributions of net total wealth, net illiquid wealth, and liquid wealth. Wealth is highly skewed in both samples: medians are modest in Table~\ref{tab:summary_stats}, while the right tails are long, especially for Households. The panels for net illiquid wealth cap the vertical axis to make the tail visible. A large share of observations have zero measured net illiquid wealth (recall the cap from below at 10,000 SEK), reflecting the discrete nature of home ownership. Figure~\ref{fig:cons_disp_inc_distr} plots the distributions of consumption, disposable income, and lagged cash-on-hand.

\begin{figure}[p]
    \centering
    \captionsetup[subfigure]{font=footnotesize,skip=1pt}
    \caption{Distribution of wealth components}
    \label{fig:wealth_distr}
    \begin{subfigure}[t]{0.44\textwidth}
        \centering
        \includegraphics[width=\linewidth]{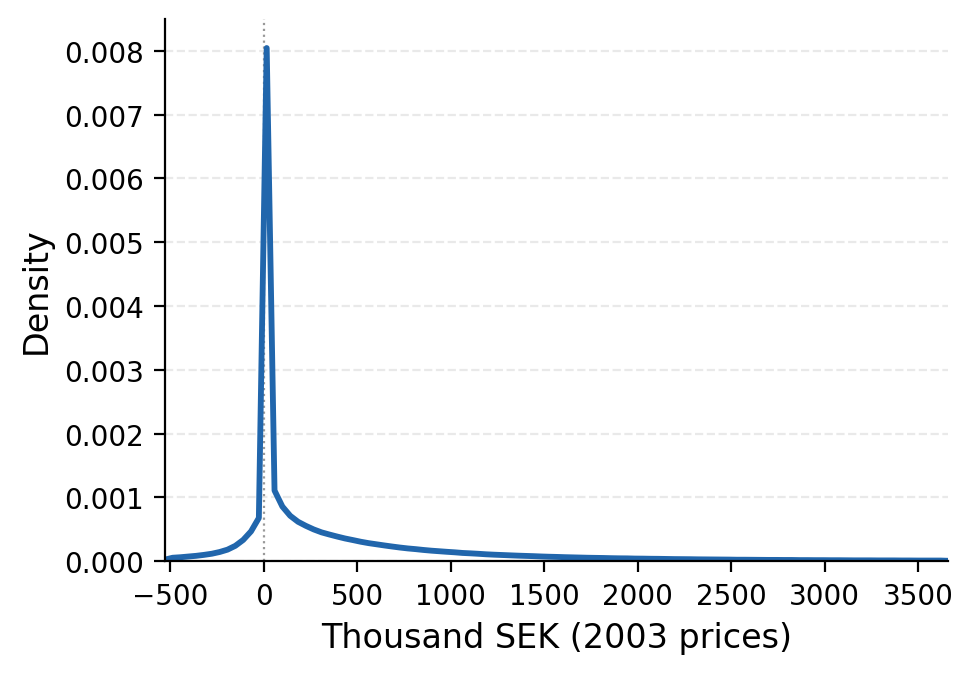}
        \caption{Net total wealth, households}
    \end{subfigure}\quad
    \begin{subfigure}[t]{0.44\textwidth}
        \centering
        \includegraphics[width=\linewidth]{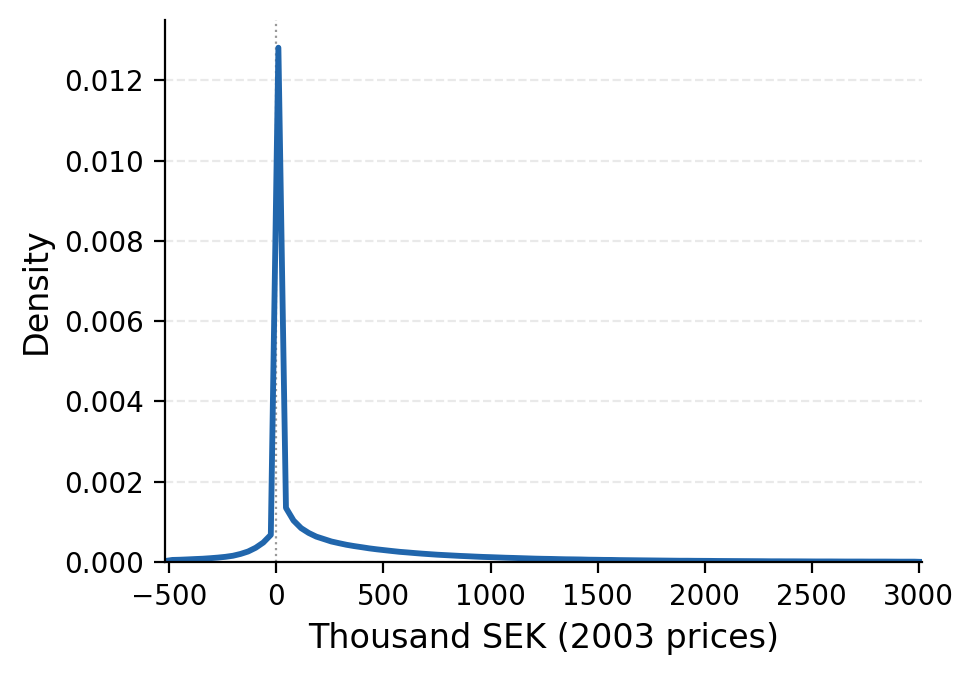}
        \caption{Net total wealth, singles}
    \end{subfigure}

    \par\medskip

    \begin{subfigure}[t]{0.44\textwidth}
        \centering
        \includegraphics[width=\linewidth]{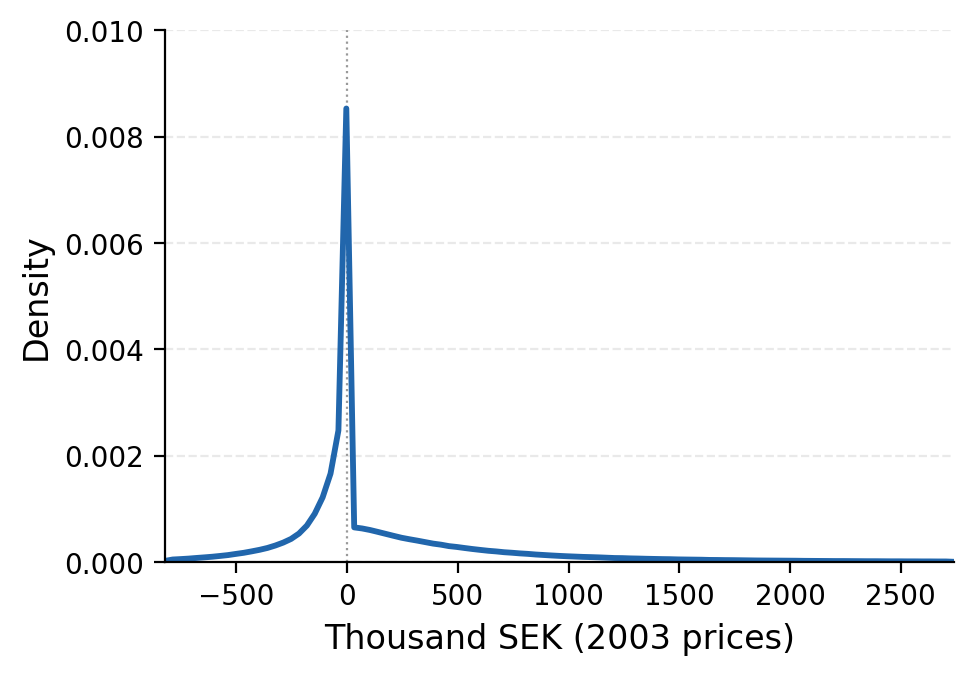}
        \caption{Net illiquid wealth, households}
    \end{subfigure}\quad
    \begin{subfigure}[t]{0.44\textwidth}
        \centering
        \includegraphics[width=\linewidth]{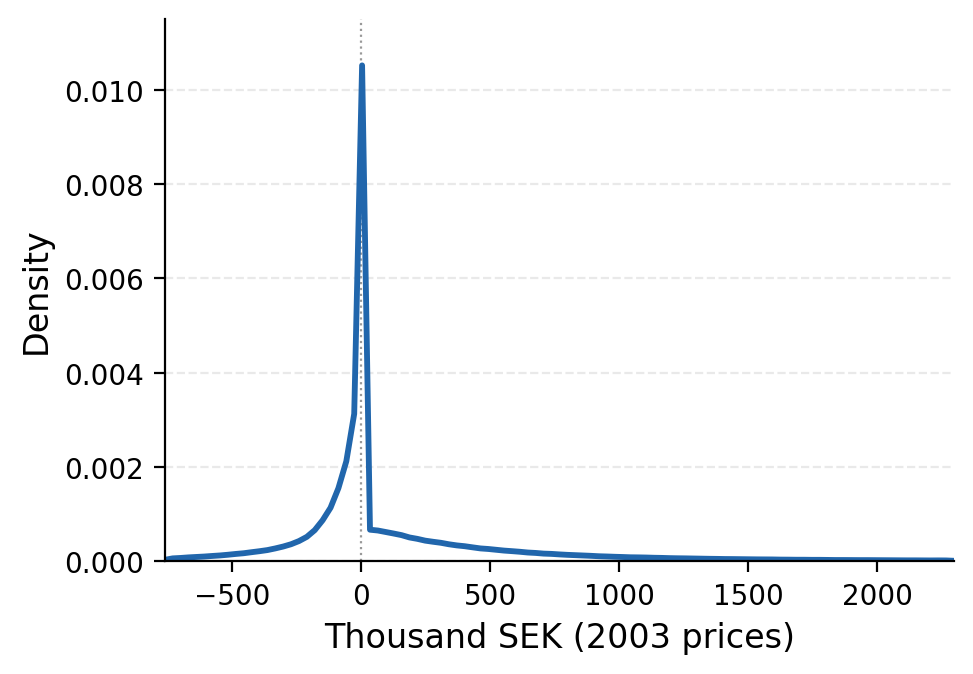}
        \caption{Net illiquid wealth, singles}
    \end{subfigure}

    \par\medskip

    \begin{subfigure}[t]{0.44\textwidth}
        \centering
        \includegraphics[width=\linewidth]{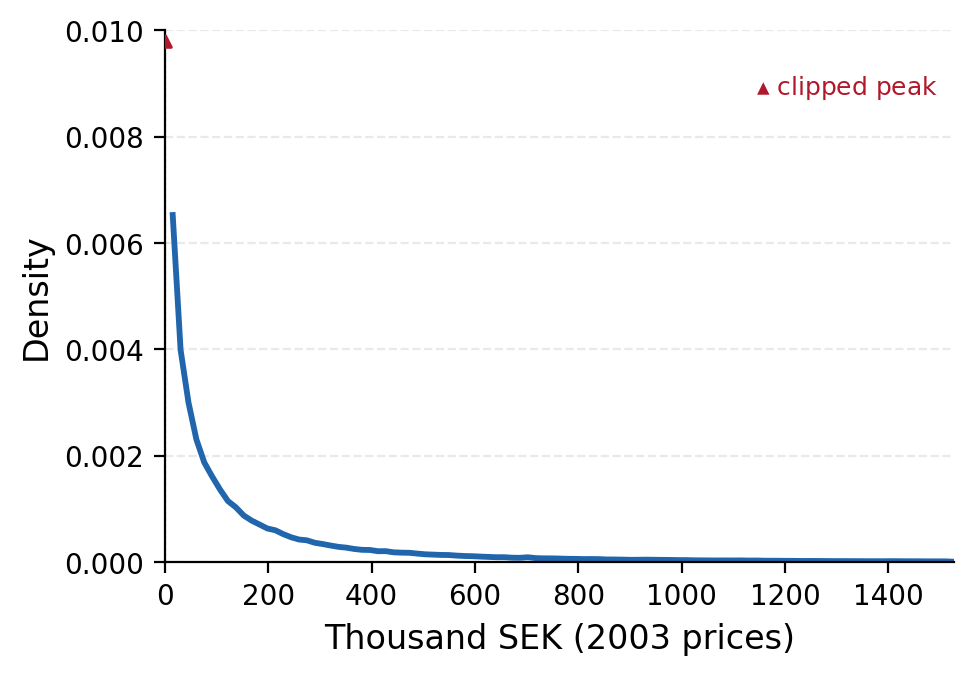}
        \caption{Liquid wealth, households}
    \end{subfigure}\quad
    \begin{subfigure}[t]{0.44\textwidth}
        \centering
        \includegraphics[width=\linewidth]{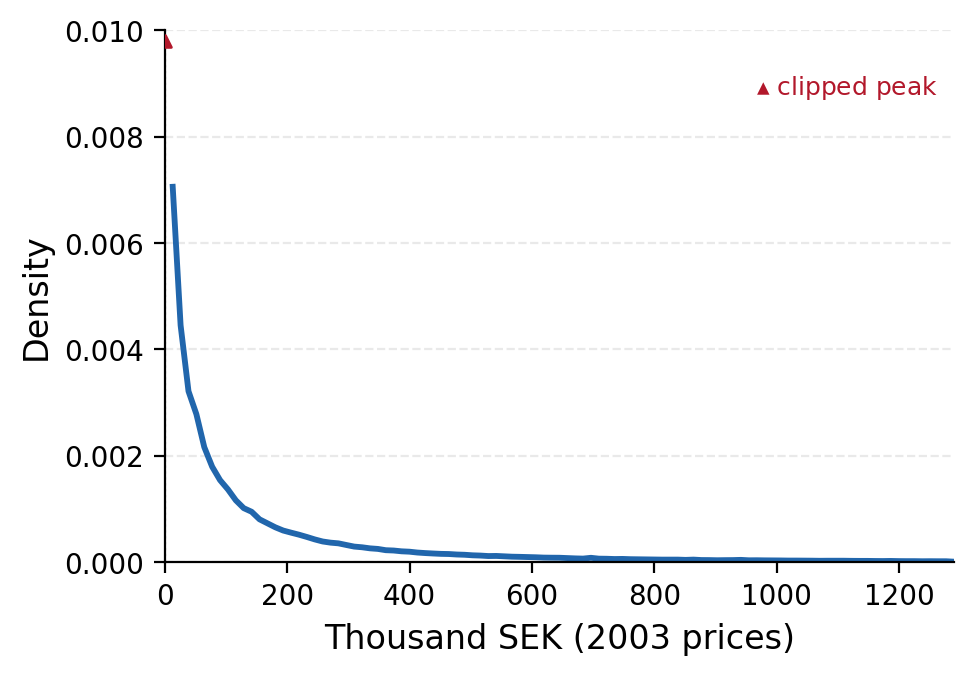}
        \caption{Liquid wealth, singles}
    \end{subfigure}
    \figurenote{Kernel density estimates pool household-year observations over 2000--2007. Variables are in thousand 2003 SEK. Peaks in the liquid-wealth panels are capped to show tail behavior.}
\end{figure}
\clearpage

\begin{figure}[p]
    \centering
    \caption{Distribution of consumption, disposable income, and cash-on-hand}
    \label{fig:cons_disp_inc_distr}
    \begin{subfigure}[t]{0.44\textwidth}
        \centering
        \includegraphics[width=\linewidth]{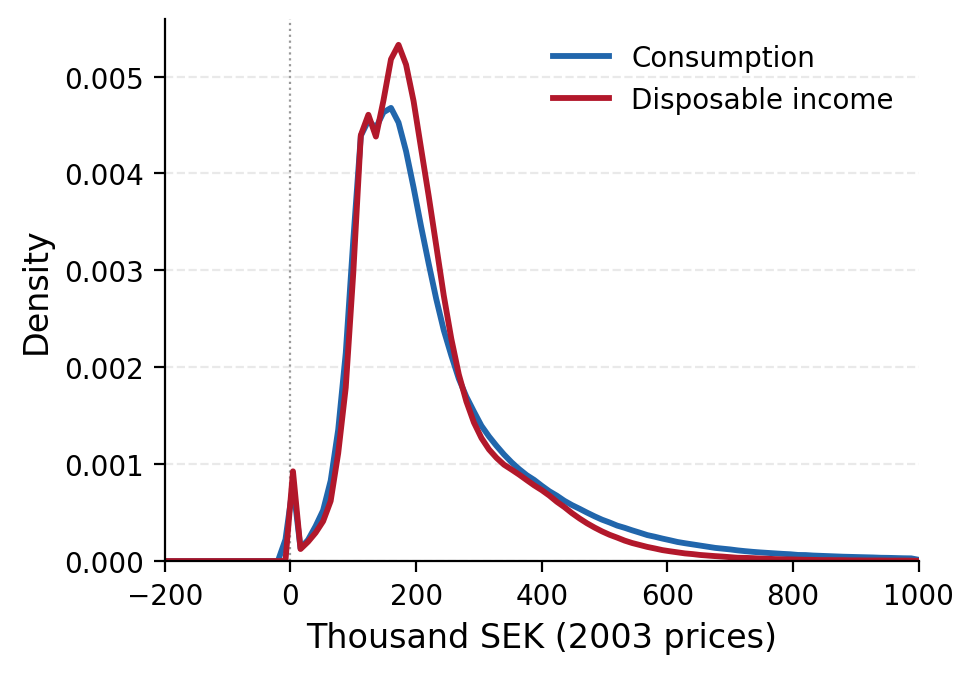}
        \caption{Consumption and income, households}
    \end{subfigure}\quad
    \begin{subfigure}[t]{0.44\textwidth}
        \centering
        \includegraphics[width=\linewidth]{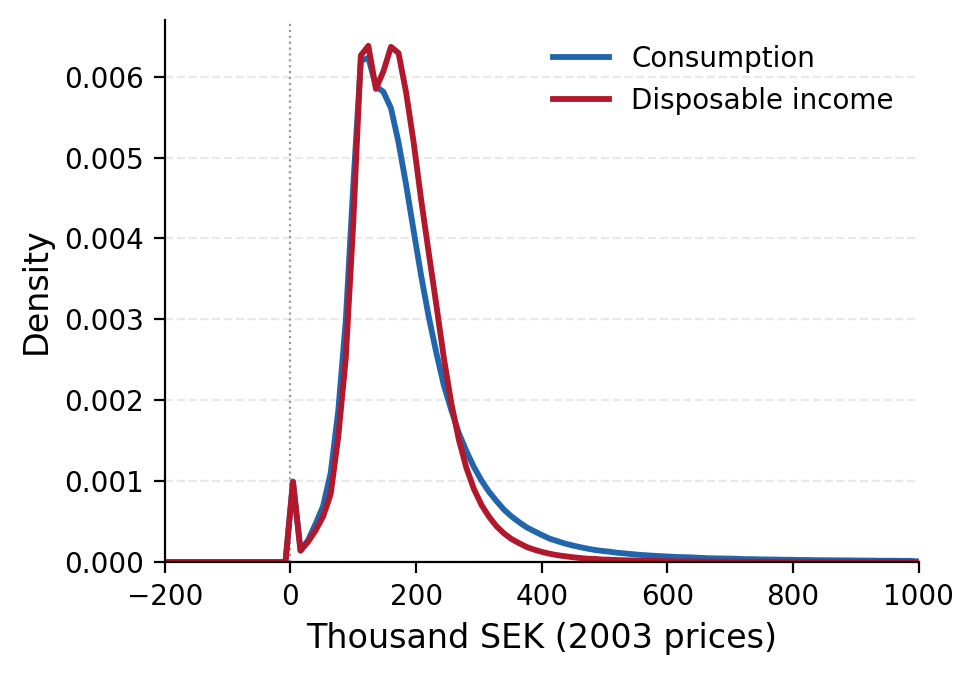}
        \caption{Consumption and income, singles}
    \end{subfigure}

    \par\medskip

    \begin{subfigure}[t]{0.44\textwidth}
        \centering
        \includegraphics[width=\linewidth]{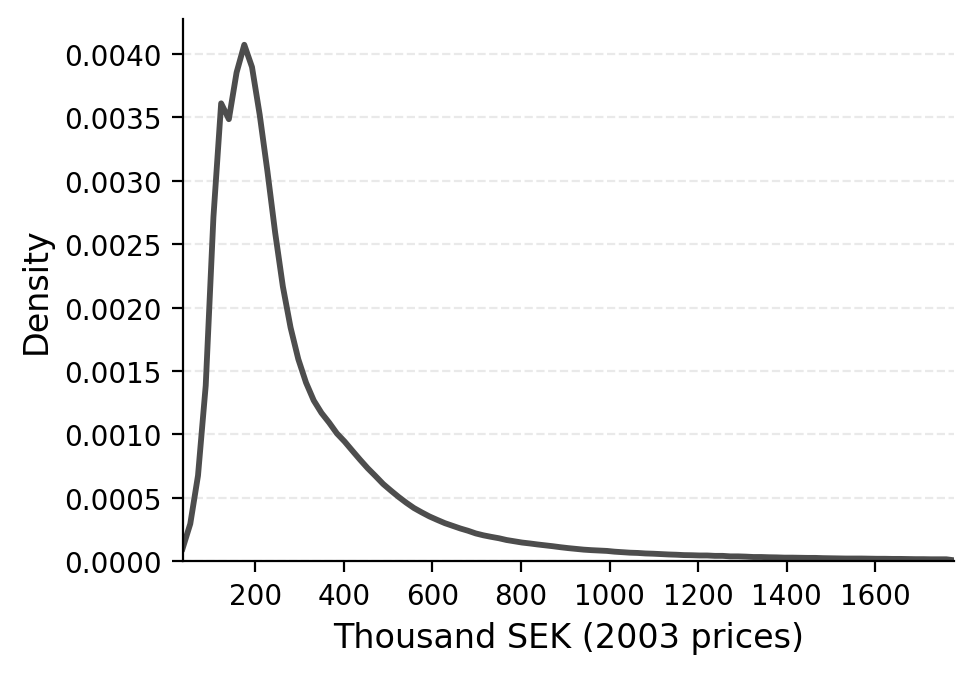}
        \caption{Lagged cash-on-hand, households}
    \end{subfigure}\quad
    \begin{subfigure}[t]{0.44\textwidth}
        \centering
        \includegraphics[width=\linewidth]{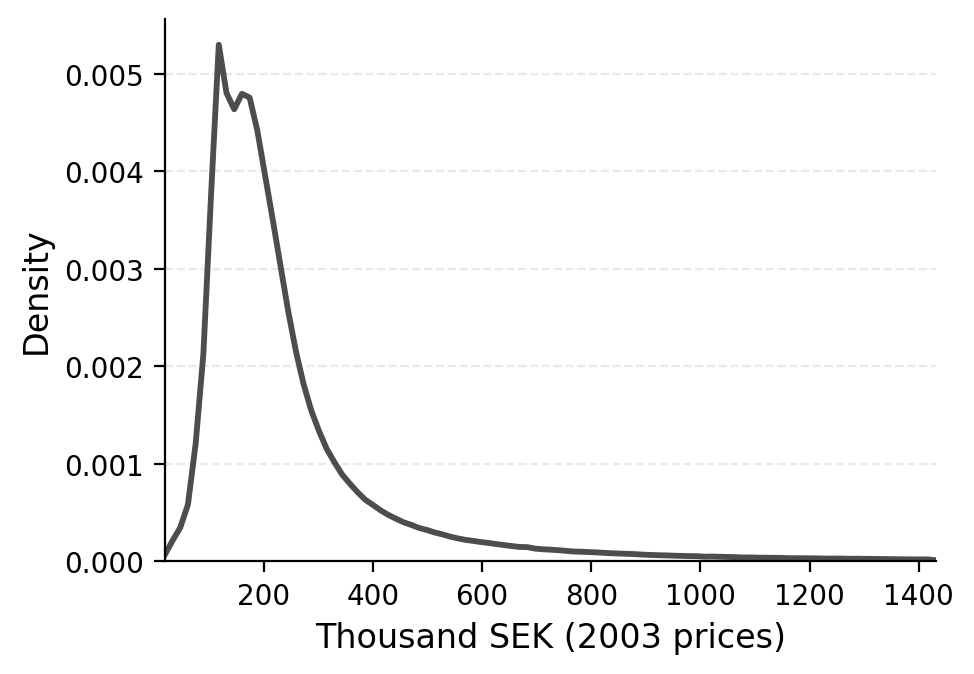}
        \caption{Lagged cash-on-hand, singles}
    \end{subfigure}

    \figurenote{Kernel density estimates are pooled over 2000--2007. Variables are in thousand 2003 SEK.}
\end{figure}

\section{Empirical Results}
\label{sec:empirical_results}

We first present results concerning the income process of the household and then estimates of the consumption pass-through coefficients and state-dependent MPCs. For all results using estimated pass-through coefficients, standard errors are bootstrapped, clustered at the level of the household. The bootstrap procedure is applied to the entire estimation procedure, from the extraction of the shocks through the second-stage estimation of Equation \eqref{eq:second_stage_estimation}. The resulting standard errors therefore account for estimation error in the generated regressors, the Kalman-smoothed shocks, as well as uncertainty about the distribution of the second-stage OLS residuals. For the income-process estimates, we report conventional analytical standard errors.  

\subsection{Income-Process Estimates and the State Variable Distribution}

Table~\ref{tab:income_process_moments} reports the income-process moments and baseline MA(1) parameter estimates. The estimates are very similar in the Singles and Households samples. 

Income growth has a large contemporaneous variance and negative first and second autocovariances, with the second smaller in absolute value. We conclude that the lag length of the MA(k) transitory income process is therefore at most 2. We select the MA(1) specification as our baseline, and show that our results are robust to selecting an MA(2) specification in Appendix~\ref{app:additional_results}. Since transitory persistence determines the width of the MPC bounds in Proposition~\ref{prop:mpc_bounds}, the documented small persistence in transitory income means that we can retrieve tight MPC bounds from the pass-through coefficients. 

\begin{table}[t]
    \centering
    \caption{Income-process moments and baseline MA(1) parameter estimates}
    \label{tab:income_process_moments}
    \begin{minipage}[t]{0.48\textwidth}
        \centering
        \caption*{A. Singles: moments}
        \resizebox{\linewidth}{!}{\begin{tabular}{l c c c}
\hline\hline
 & $\Delta \ln y_t$ & $\Delta \ln y_{t+1}$ & $\Delta \ln y_{t+2}$ \\
\hline
$\Delta \ln y_t$ & \shortstack[c]{0.0301\\(0.0299, 0.0303)} & \shortstack[c]{-0.0074\\(-0.0075, -0.0073)} & \shortstack[c]{-0.0026\\(-0.0027, -0.0025)} \\
$\Delta c_t$ & \shortstack[c]{0.0218\\(0.0216, 0.0220)} & \shortstack[c]{-0.0037\\(-0.0039, -0.0036)} & \shortstack[c]{-0.0022\\(-0.0024, -0.0021)} \\
\hline
\(N\) & \multicolumn{3}{c}{   1,724,757} \\
\hline\hline
\end{tabular}
}
    \end{minipage}\hfill
    \begin{minipage}[t]{0.48\textwidth}
        \centering
        \caption*{B. Singles: parameters}
        \resizebox{\linewidth}{!}{\begin{tabular}{l c c c}
\hline\hline
 & $\theta$ & $\sigma^2_{\epsilon}$ & $\sigma^2_{\eta}$ \\
\hline
Estimate & \shortstack[c]{0.2191\\(0.2140, 0.2242)} & \shortstack[c]{0.0123\\(0.0122, 0.0125)} & \shortstack[c]{0.0097\\(0.0094, 0.0099)} \\
\(N\) & \multicolumn{3}{c}{   1,724,757} \\
\hline\hline
\end{tabular}
}
    \end{minipage}

    \vspace{1em}

    \begin{minipage}[t]{0.48\textwidth}
        \centering
        \caption*{C. Households: moments}
        \resizebox{\linewidth}{!}{\begin{tabular}{l c c c}
\hline\hline
 & $\Delta \ln y_t$ & $\Delta \ln y_{t+1}$ & $\Delta \ln y_{t+2}$ \\
\hline
$\Delta \ln y_t$ & \shortstack[c]{0.0293\\(0.0292, 0.0294)} & \shortstack[c]{-0.0072\\(-0.0073, -0.0071)} & \shortstack[c]{-0.0026\\(-0.0026, -0.0025)} \\
$\Delta c_t$ & \shortstack[c]{0.0222\\(0.0221, 0.0223)} & \shortstack[c]{-0.0039\\(-0.0040, -0.0038)} & \shortstack[c]{-0.0023\\(-0.0024, -0.0022)} \\
\hline
\(N\) & \multicolumn{3}{c}{   3,060,626} \\
\hline\hline
\end{tabular}
}
    \end{minipage}\hfill
    \begin{minipage}[t]{0.48\textwidth}
        \centering
        \caption*{D. Households: parameters}
        \resizebox{\linewidth}{!}{\begin{tabular}{l c c c}
\hline\hline
 & $\theta$ & $\sigma^2_{\epsilon}$ & $\sigma^2_{\eta}$ \\
\hline
Estimate & \shortstack[c]{0.2213\\(0.2176, 0.2249)} & \shortstack[c]{0.0120\\(0.0119, 0.0121)} & \shortstack[c]{0.0095\\(0.0093, 0.0096)} \\
\(N\) & \multicolumn{3}{c}{   3,060,626} \\
\hline\hline
\end{tabular}
}
    \end{minipage}
\end{table}

These estimates are broadly aligned with previous evidence on the time-series properties of disposable income, but the magnitudes of both the variances and autocovariances are smaller compared to US survey evidence, which is expected when there is less measurement error. In the PSID-based estimates reported by \cite{Commault2022}, which use the same data as \cite{Blundell_et_al_2008_AER}, the autocovariances of annual log income growth have the same shape as in Table~\ref{tab:income_process_moments}: a large contemporaneous variance and negative covariances at the first two leads. Accordingly, \cite{Commault2022} selects an MA(1) specification.\footnote{Statistical significance is not a good criterion for selecting the lag length in administrative data, since the large sample size and low measurement error mean that the covariance at almost all leads is statistically significant.} Our Swedish covariance estimates are lower in absolute value, with a contemporaneous variance around 0.03 rather than 0.066 and first- and second-lead covariances roughly one-third to one-half as large. The larger covariances at short horizons in the PSID likely reflect i.i.d. measurement error in income. They are similarly comparable to the other cross-country survey-based disposable-income estimates in \cite{Krueger_et_al_2010_RED}, but with variances at the low end of the range.  

With an estimated income process, we apply the Kalman smoother and retrieve the underlying shocks. The level of permanent income is initialized using lagged observed disposable income and then constructed recursively from the recovered permanent shocks, \(\widehat P_{i,t}=\widehat P_{i,t-1}\exp(\widehat\eta_{i,t})\). With these estimates, we can then construct the central state variable used to estimate the heterogeneity in consumption response: cash-on-hand normalized by permanent income. Figure \ref{fig:normalized_coh_distr} shows the distribution of this variable. As expected, the distribution peaks between 1 and 2, indicating that cash-on-hand for the bulk of households is between one and two years of permanent income, and contains a fat tail with some households holding liquid wealth worth several years of permanent income.

\begin{figure}[t]
    \centering
    \caption{Distribution of lagged normalized cash-on-hand}
    \label{fig:normalized_coh_distr}
    \begin{subfigure}[t]{0.44\textwidth}
        \centering
        \includegraphics[width=\linewidth]{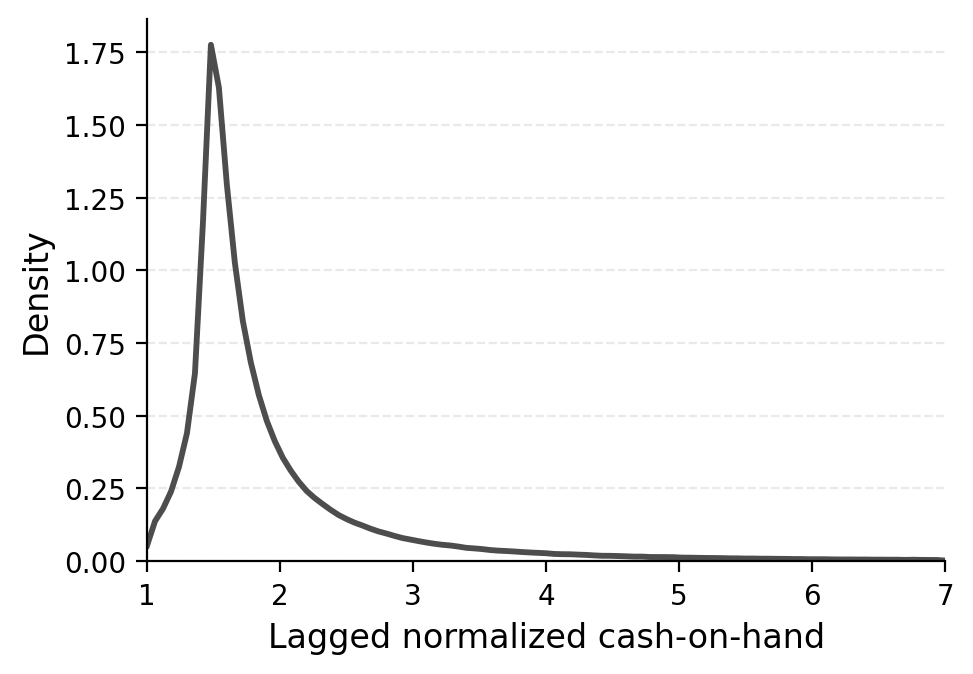}
        \caption{Households}
    \end{subfigure}\quad
    \begin{subfigure}[t]{0.44\textwidth}
        \centering
        \includegraphics[width=\linewidth]{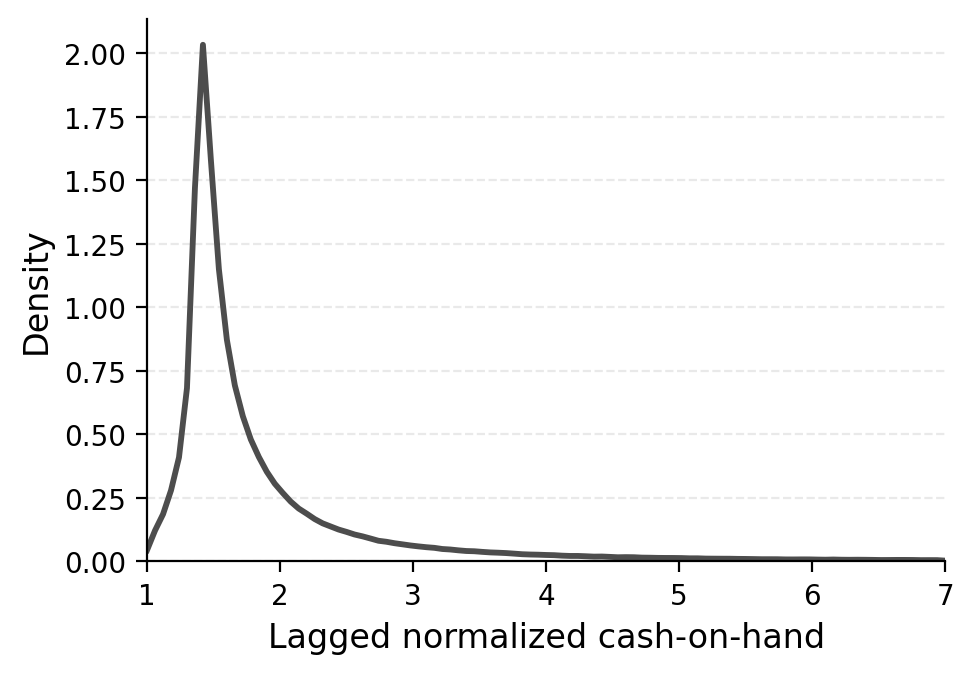}
        \caption{Singles}
    \end{subfigure}
    \figurenote{Kernel density estimates pool household-year observations over 2000--2007. Cash-on-hand is normalized by permanent income, constructed using the MA(1) income-process estimates. The horizontal axis can therefore be interpreted as multiples of yearly permanent income.}
\end{figure}

\subsection{Pass-Through Coefficients}

Under the MA(1) baseline, the predetermined state is $(m_{i,t-1},\varepsilon_{i,t-1})$. We partition observations into a $10\times 10$ grid using lagged normalized cash-on-hand and the lagged transitory shock. We retrieve the Kalman-smoothed permanent and transitory shock signals from the pooled sample, and then estimate Equation~\eqref{eq:second_stage_estimation} within these cells. The figures below collapse the grid to cash-on-hand profiles by averaging over the lagged-transitory-shock bins. The horizontal coordinate for each point is not the decile number, but the observation-count-weighted average of lagged normalized cash-on-hand, $m_{i,t-1}$, among observations in that cash-on-hand bin.

Figure~\ref{fig:pass_through_profiles_ma1} reports the baseline MA(1) pass-through profiles. The transitory pass-through coefficient declines with cash-on-hand in both samples, which is the direct source of the MPC gradient below. The permanent-shock pass-through is much flatter and is centered around one over most of the distribution. This pattern is consistent with the buffer-stock model interpretation: cash-on-hand matters most for the response to transitory resources, while permanent shocks shift lifetime resources and therefore have a more uniform effect across the distribution.

\begin{figure}[t]
    \centering
    \caption{Pass-through coefficients by cash-on-hand}
    \label{fig:pass_through_profiles_ma1}
    \includegraphics[width=0.92\textwidth]{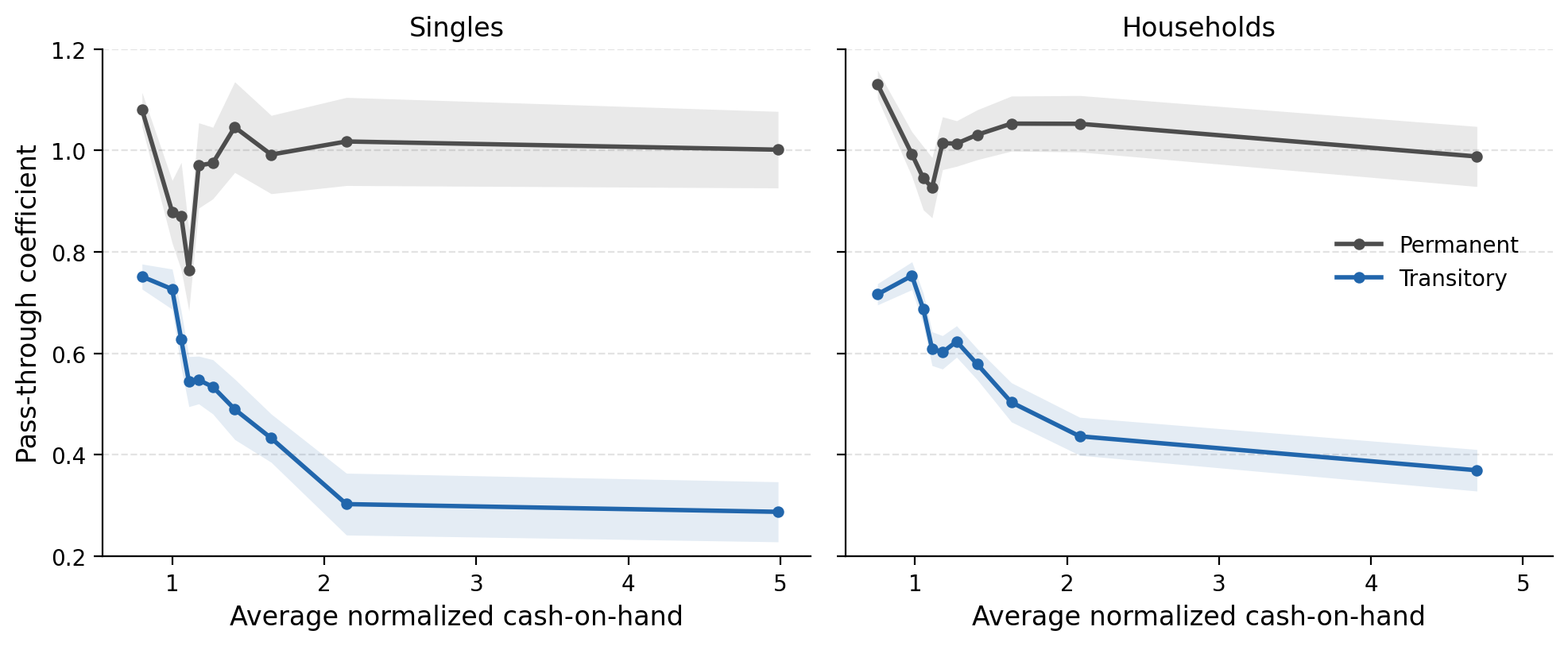}
    \figurenote{Panels report Singles and Households. Estimates are computed within decile bins of lagged normalized cash-on-hand; the horizontal axis plots the observation-count-weighted average normalized cash-on-hand in each bin. Each panel shows permanent- and transitory-shock pass-through coefficients from the OLS projection estimator. Shaded regions show 95 percent confidence intervals based on bootstrap standard errors clustered at the household level.}
\end{figure}

\subsection{MPC Gradients by Cash-on-Hand} \label{MPC_gradients}

We convert the transitory pass-through coefficient into the bounds on the MPC in Proposition~\ref{prop:mpc_bounds} and average across the lagged-shock bins within each cash-on-hand decile. The figures below report the lower bounds. Because $\theta_1$ is around 0.22 in both samples, the upper bounds are obtained by multiplying the plotted profiles by about 1.22. The slope of the profile is therefore not affected by the bound conversion: the conversion scales all deciles within a sample by the same factor.

Figure~\ref{fig:mpc_gradient_ma1} shows the main result. The projection estimator produces a steep downward yearly MPC profile for Singles. The profile is convex in average normalized cash-on-hand: most of the decline occurs at low cash-on-hand levels, while the profile is flatter over the upper part of the distribution. The lower (upper) bound is 0.68 (0.83) in the lowest cash-on-hand decile and about 0.31 (0.38) in the highest decile. The gradient is clearly downward sloping and precisely estimated. The population-average lower bound is about 0.47, with an upper bound of about 0.58. 

\begin{figure}[t]
    \centering
    \caption{MPC by cash-on-hand}
    \label{fig:mpc_gradient_ma1}
    \includegraphics[width=\textwidth]{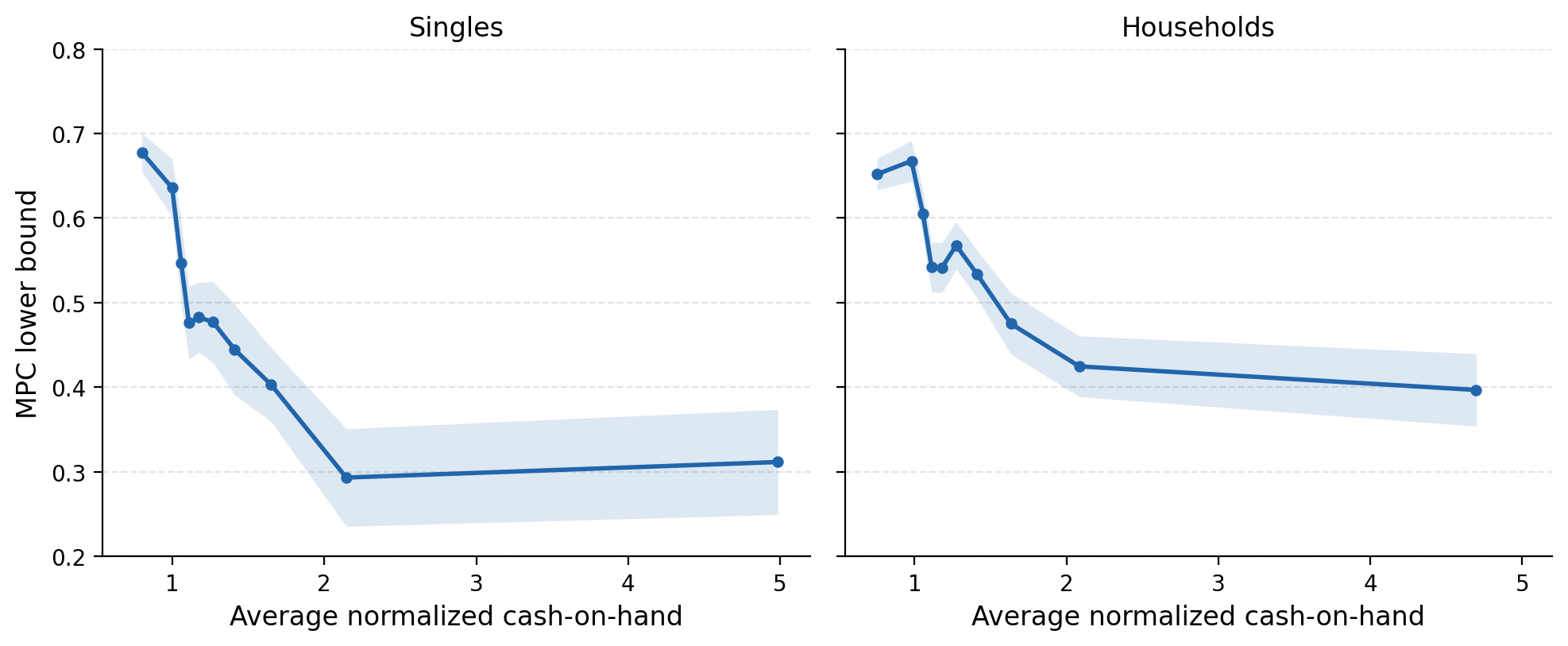}
    \figurenote{Estimates show lower bounds of the MPC, and are computed within decile bins of lagged normalized cash-on-hand; the horizontal axis plots the observation-count-weighted average normalized cash-on-hand in each bin. Shaded regions show 95 percent confidence intervals based on bootstrap standard errors clustered at the household level.}
\end{figure}

The Households panel repeats the exercise for the broader household sample. The level of the MPC is slightly higher in the upper part of the cash-on-hand distribution than in the Singles sample, but the main pattern is the same: a large drop at low levels of normalized cash-on-hand followed by a flatter profile further out in the distribution. The lower (upper) bound declines from about 0.65 (0.80) in the first decile to about 0.40 (0.48) in the tenth decile. 

The empirical gradient is the pattern predicted by buffer-stock consumption models. Households with little (permanent-income normalized) cash-on-hand have limited self-insurance and respond strongly to transitory income changes. Households with more cash-on-hand can smooth the same shock over time, so their current consumption response is smaller. The estimates therefore connect the pass-through equation in Proposition~\ref{prop:passthrough} to the key macroeconomic object emphasized by heterogeneous-agent models: the cross-sectional distribution of liquid resources shapes aggregate demand transmission.

At the same time, the level of the response, even though it refers to \emph{yearly} MPC, is substantial even at the upper end of the cash-on-hand distribution. This tendency for the top cash-on-hand decile is stronger for the Household sample (0.40) than the Singles sample (0.31). It is higher than one might expect from a single-asset, single-good model, where households with substantial liquid resources behave more like permanent-income consumers. However, our outcome is an expenditure measure rather than nondurable consumption alone, and it includes durable purchases such as cars and furniture. As emphasized by \cite{Maxted_Laibson_Moll_2025}, durable expenditures make the marginal propensity to spend significantly larger than the MPC predicted by a model with only one nondurable consumption good. They derive a conversion factor using a standard two-good model. Using their parameter values, the implied conversion factor between a yearly nondurable+durable MPX and the yearly nondurable MPC is about 1.5. Our baseline singles estimates would thus imply a gradient of the lower bound of the yearly nondurable MPC from about 0.45 in the first decile to 0.21 in the last decile, with a population average of 0.32, which is not far from what is implied by a standard calibration of a two-asset buffer-stock model \citep{Kaplan_Violante2014}.

\subsection{MPC Gradients: Income or Liquid Wealth}

Permanent-income normalized cash-on-hand is the sum of normalized liquid wealth and normalized disposable income. Figure~\ref{fig:mpc_component_gradients_ma1} partitions observations by the permanent-income-normalized components of lagged cash-on-hand while continuing to average over the remaining state variables. The liquid-wealth gradient is steep in both samples. In the Singles sample, the lower-bound MPC falls from about 0.70 at the bottom of the liquid-wealth-component distribution to close to zero at the top. In the Households sample, the same gradient declines from about 0.77 to about 0.24. That fact the number of points is fewer than 10 reflects that liquid wealth is below the detection limit of 10 000 SEK for a large share of the sample, consistent with the descriptive statistics in Table~\ref{tab:summary_stats}.

There is less variation in MPC across the disposable-income component, and the profile is non-monotone. For both Singles and Households, the lower-bound MPC falls from the bottom of the normalized disposable-income-component distribution through the middle deciles and then rises again at the top. This comparison suggests that the main driver of the cash-on-hand gradient in Figure~\ref{fig:mpc_gradient_ma1} is the gradient in liquid wealth, rather than the gradient in disposable income.

\begin{figure}[t]
    \centering
    \caption{MPC by components of cash-on-hand}
    \label{fig:mpc_component_gradients_ma1}
    \includegraphics[width=0.92\textwidth]{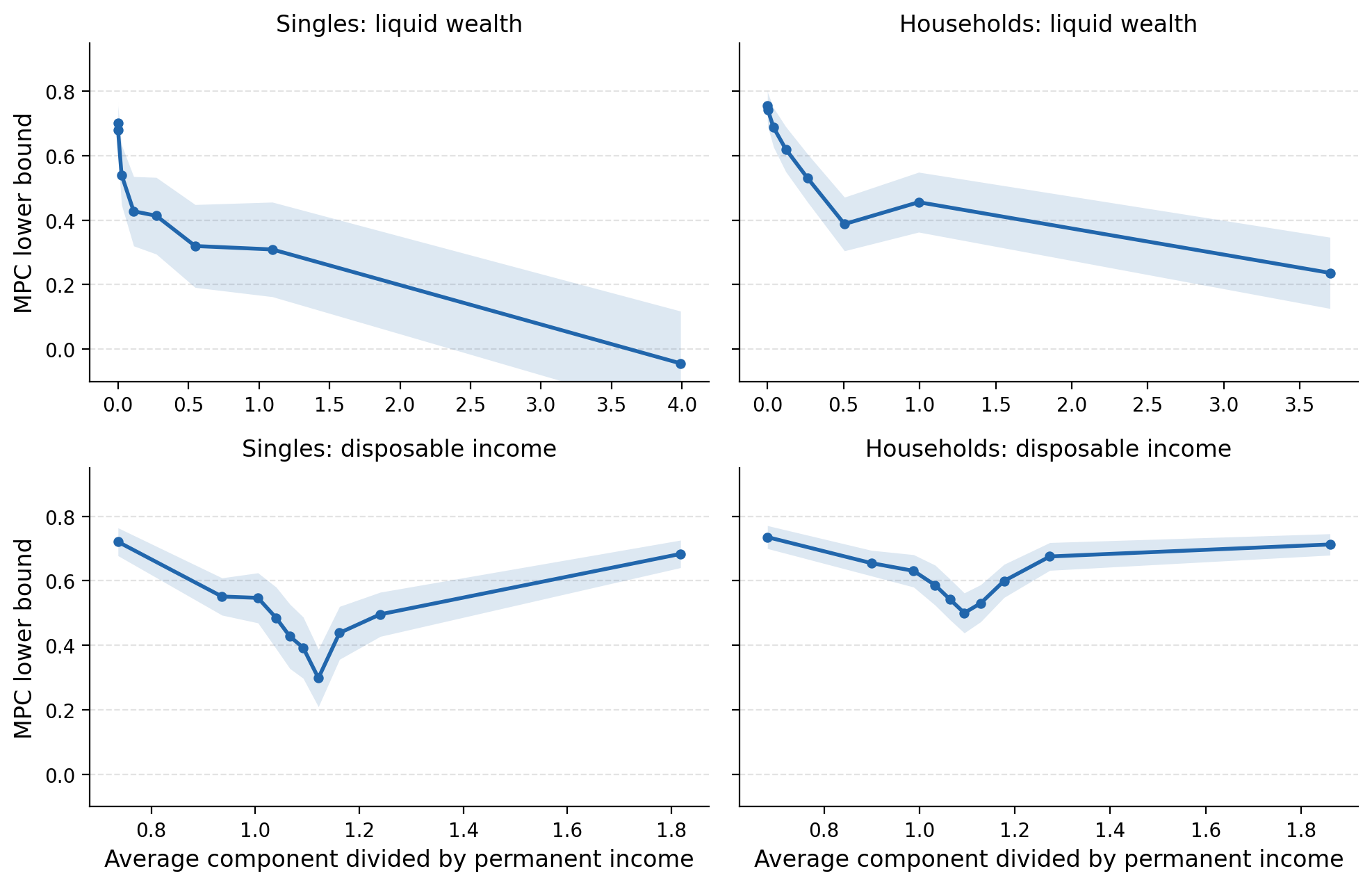}
    \figurenote{Estimates show lower bounds of the MPC, and are computed within decile bins of the relevant permanent-income-normalized cash-on-hand component; the horizontal axis plots the observation-count-weighted average component divided by permanent income in each bin. Shaded regions show 95 percent confidence intervals based on bootstrap standard errors clustered at the household level.}
\end{figure}

\subsection{Comparison to BPP-C}
\label{subsec:bpp_comparison}

Figure~\ref{fig:mpc_gradient_bpp_comparison_singles} compares the baseline projection estimator for Singles with the corresponding BPP-C estimator, which uses $\Delta \log Y_{i,t+2}$ as the single instrument for estimating the pass-through coefficient under the MA(1) baseline. We only show it for the Singles sample as the bootstrapping is computationally very time-consuming for the estimator, the corresponding graph for households without bootstrapped standard errors is shown in Appendix~\ref{app:additional_results}. This estimator is designed to be robust to income measurement error, but it uses a weaker income signal. Consistent with Proposition~\ref{prop:estimation_minvar}, the IV profile is much less precise---the standard errors span the entire range of the y-axis--and we cannot reject a flat profile. The large standard errors reflect the weak identifying covariance underlying the BPP-C estimator. With an MA(1) transitory income process, the distant-lead covariance used by BPP-C is proportional to the persistence parameter $\theta_1$, which is only about 0.22 in our sample.\footnote{As $\theta_1$ approaches zero, both the numerator and denominator of the covariance-ratio estimator are dominated by sampling variation; in the zero-persistence limit, the estimator behaves like a ratio of zero-mean normal variables and has Cauchy-like heavy tails with no finite mean.}

\begin{figure}[t]
    \centering
    \caption{MPC by cash-on-hand: OLS-LP versus BPP}
    \label{fig:mpc_gradient_bpp_comparison_singles}
    \includegraphics[width=0.78\textwidth]{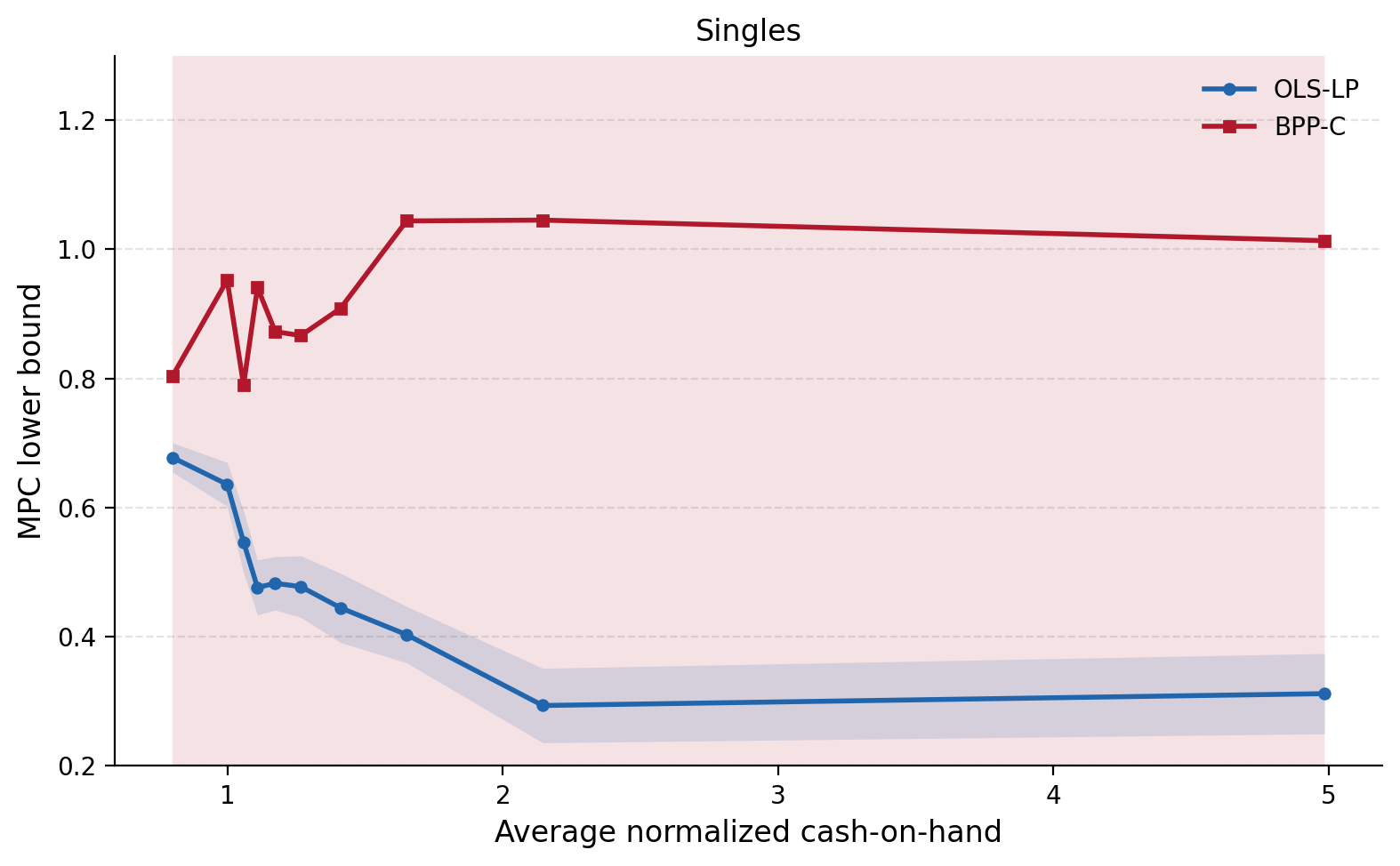}
    \figurenote{Estimates show lower bounds of the MPC for Singles, and are computed within decile bins of lagged normalized cash-on-hand; the horizontal axis plots the observation-count-weighted average normalized cash-on-hand in each bin. OLS-LP uses the Kalman-smoothed transitory shock. BPP-C uses the distant future-income-growth instrument. Shaded regions show 95 percent confidence intervals based on bootstrap standard errors clustered at the household level.}
\end{figure}

\subsection{Comparison to previous estimates}

Our baseline (Singles sample) average estimated MPC (nondurable MPC) lower and upper bounds are about 0.47 (0.32) and 0.58 (0.39); in the Households sample, the corresponding bounds are about 0.54 (0.36) and 0.66 (0.44). The average nondurable MPC estimates are above older estimates on PSID data using semi-structural methods (e.g. \citet{Blundell_et_al_2008_AER}), but in line with \cite{Commault2022}, who reports an MPC of 0.32. The total expenditure MPC estimates are also in line with semi-structural estimates from administrative data, such as \cite{Crawley2023}, which reports an annual MPC of 0.64. They are on the lower end compared to the literature on tax rebates; \cite{Johnson_et_al_2006_AER} estimate that households spent about two-thirds of the tax rebate within two quarters, while \cite{Parker_2013_AER} find even larger total expenditure responses to the 2008 stimulus payments once durable purchases are included.

The distinctive feature of our estimates is the steepness of the cash-on-hand gradient. Qualitatively, this is consistent with the broader evidence that liquid resources are central for consumption smoothing. However, the gradient here is steeper compared to many papers, including both older survey-based estimates \citep{Johnson_et_al_2006_AER,Parker_2013_AER} and more recent papers using special income shocks and experimental designs \citep{Fuster_Kaplan_Zafar2020,Fagareng_et_al_2021_AEJM,Boehm2025}. Our estimates are more similar to \cite{Crawley2023}, who find a strong negative relationship between liquid wealth and the total expenditure MPC in Danish administrative data, and \cite{Ganong2025}, who find that consumption smoothing of typical labor-income shocks improves sharply with liquid wealth in US transaction data. In numbers: the lower-bound MPC estimates in our data fall by about 0.37 from the first to the last cash-on-hand decile in the baseline sample, which is very close to the fall in \cite{Ganong2025} across liquid-wealth deciles, and somewhat smaller than the 0.48 decline across liquid-wealth quintiles in \cite{Crawley2023}. 

The pattern with a flatter gradient in studies using special income shocks and experimental designs suggests that the source of income variation may matter for how households respond. Ordinary earnings fluctuations are part of the income risk that households routinely face, so households with substantial liquid resources may be able to smooth them more effectively. Special payments (lottery wins, tax rebates), by contrast, are typically positive, salient, sometimes large, and often tied to unusual timing or design features; they may also trigger durable purchases to a larger extent and be tied to specific behavioral spending channels (especially for lottery winnings). These features can raise spending responses among liquid or affluent households and thereby flatten the estimated gradient by liquid resources, as suggested by the role of new vehicles in \cite{Misra2014} and the transfer-design effects in \cite{Boehm2025}. This interpretation is necessarily cautious, since the studies also differ in their horizons, consumption concepts, shock sizes, and wealth definitions.

\section{Conclusion}

This paper develops a semi-structural method for estimating state-dependent MPCs in administrative data with precisely measured income histories. We derive a pass-through equation from a canonical buffer-stock model, show how transitory-shock pass-through bounds the MPC conditional on household state variables, and estimate the relevant income shocks using a Kalman smoother. Under negligible income measurement error, this estimator is identified and has minimum variance within the class of consistent estimators that are linear in household income histories, including BPP-style estimators.

Applying the method to Swedish administrative data, we find a steep negative MPC gradient with respect to cash-on-hand normalized by permanent income. The average yearly MPC is high, around one half, but the response is much larger among households with little cash-on-hand than among households at the top of the distribution: in the baseline Singles sample, the lower bound falls from about 0.68 in the bottom decile to about 0.31 in the top decile. The pattern is similar for the broader household sample and robust to richer income-process specifications. This suggests that liquidity is an important determinant of consumption responses to ordinary income risk.

\putbib[References/ref]
\end{bibunit}

\newpage

\appendix
\begin{bibunit}

\section{Proofs for Identification and MPC Bounds}
\label{app:identification_proofs}

\subsection{Pass-Through Equation}
\begin{proof}[Proof of Proposition \ref{prop:passthrough}]
Fix the predetermined observed state $(m_{i,t-1},\mathbf{s}_{i,t-1})$. From \eqref{eq:consumption_growth_policy}, consumption growth can be written as
\[
    \Delta \log C_{i,t}
    =
    \eta_{i,t}
    +\log c(m_{i,t},\mathbf{s}_{i,t})
    -\log c(m_{i,t-1},\mathbf{s}_{i,t-1}),
\]
where $m_{i,t}$ is given by \eqref{eq:normalized_cash_on_hand_lom} and $\mathbf{s}_{i,t}$ is the MA state implied by $\mathbf{s}_{i,t-1}$ and the new transitory innovation $\varepsilon_{i,t}$.

Linearize this expression with respect to $(\eta_{i,t},\varepsilon_{i,t})$ around $\eta_{i,t}=\varepsilon_{i,t}=0$. Let $\nu^0_{i,t}\equiv \sum_{j=1}^k\theta_j\varepsilon_{i,t-j}$ denote the predetermined transitory component, and let $\bar{\mathbf{s}}(\mathbf{s}_{i,t-1})$ denote the period-$t$ MA state when $\varepsilon_{i,t}=0$. The derivatives of normalized cash-on-hand at this point are
\[
    \left.
    \frac{\partial m_{i,t}}{\partial \eta_{i,t}}
    \right|_0
    =
    -(1+r)
    \left(m_{i,t-1}-c(m_{i,t-1},\mathbf{s}_{i,t-1})\right),
    \qquad
    \left.
    \frac{\partial m_{i,t}}{\partial \varepsilon_{i,t}}
    \right|_0
    =
    Y^d_{i,t}\exp(\nu^0_{i,t}),
\]
where the second equality uses $\theta_0=1$.

The coefficient on the permanent shock is therefore
\[
    \lambda(m_{i,t-1},\mathbf{s}_{i,t-1})
    =
    1
    -(1+r)
    \left(m_{i,t-1}-c(m_{i,t-1},\mathbf{s}_{i,t-1})\right)
    \frac{\partial \log c(m_{i,t},\mathbf{s}_{i,t})}{\partial m_{i,t}},
\]
with the derivative evaluated at
\[
    (m_{i,t},\mathbf{s}_{i,t})
    =
    \left(
    g(m_{i,t-1},\mathbf{s}_{i,t-1},0,\nu^0_{i,t}),
    \bar{\mathbf{s}}(\mathbf{s}_{i,t-1})
    \right).
\]
The coefficient on the transitory shock is
\[
    \gamma(m_{i,t-1},\mathbf{s}_{i,t-1})
    =
    Y^d_{i,t}\exp(\nu^0_{i,t})
    \frac{\partial \log c(m_{i,t},\mathbf{s}_{i,t})}{\partial m_{i,t}}
    +
    \frac{\partial \log c(m_{i,t},\mathbf{s}_{i,t})}{\partial s^{(1)}_{i,t}},
\]
evaluated at the same point. This gives \eqref{eq:passthrough_equation} and \eqref{eq:gamma_cash_news}.
\end{proof}

\subsection{MPC Bound from Pass-Through}
\begin{proof}[Proof of Proposition \ref{prop:mpc_bounds}]
Let
\[
    \Xi_{i,t}
    \equiv
    Y^d_{i,t}\exp(\nu^0_{i,t})
    \frac{\partial \log c(m_{i,t},\mathbf{s}_{i,t})}{\partial m_{i,t}},
\]
with all objects evaluated at the linearization point in Proposition \ref{prop:passthrough}. This is the log-consumption response generated by the cash-on-hand channel of a transitory innovation. By \eqref{eq:gamma_cash_news},
\begin{equation}
    \gamma(m_{i,t-1},\mathbf{s}_{i,t-1})
    =
    \Xi_{i,t}
    +
    \frac{\partial \log c(m_{i,t},\mathbf{s}_{i,t})}{\partial s^{(1)}_{i,t}}.
\label{eq:appendix_gamma_xi_decomposition}
\end{equation}
Assumption \ref{ass:bounded_news} implies
\begin{equation}
    \Xi_{i,t}
    \le
    \gamma(m_{i,t-1},\mathbf{s}_{i,t-1})
    \le
    \left(1+\sum_{j=1}^{k}\theta_j\right)\Xi_{i,t}.
\label{eq:appendix_gamma_xi_bounds}
\end{equation}

It remains to translate $\Xi_{i,t}$ into the MPC. At the linearization point,
\[
    C_{i,t}=P_{i,t}c(m_{i,t},\mathbf{s}_{i,t}),
    \qquad
    Y_{i,t}=Y^d_{i,t}P_{i,t}\exp(\nu^0_{i,t}).
\]
Hence
\begin{equation}
    \frac{C_{i,t}}{Y_{i,t}}\Xi_{i,t}
    =
    \frac{c(m_{i,t},\mathbf{s}_{i,t})}
         {Y^d_{i,t}\exp(\nu^0_{i,t})}
    Y^d_{i,t}\exp(\nu^0_{i,t})
    \frac{1}{c(m_{i,t},\mathbf{s}_{i,t})}
    \frac{\partial c(m_{i,t},\mathbf{s}_{i,t})}{\partial m_{i,t}}
    =
    \frac{\partial c(m_{i,t},\mathbf{s}_{i,t})}{\partial m_{i,t}}.
\label{eq:appendix_xi_mpc_identity}
\end{equation}
Since $m_{i,t}=M_{i,t}/P_{i,t}$ and $C_{i,t}=P_{i,t}c(m_{i,t},\mathbf{s}_{i,t})$, the last expression in \eqref{eq:appendix_xi_mpc_identity} is the MPC with respect to current cash-on-hand. Multiplying \eqref{eq:appendix_gamma_xi_bounds} by $C_{i,t}/Y_{i,t}$ and substituting \eqref{eq:appendix_xi_mpc_identity} gives \eqref{eq:mpc_bounds}.
\end{proof}

\section{State-Space and Kalman Smoother Details}
\label{app:kalman_smoother}

\subsection{Setup}

We illustrate the Kalman smoother using the baseline MA(1) case. The income process implies
\[
    \Delta \log Y_{i,t}
    =
    \eta_{i,t}
    +
    \psi_0\varepsilon_{i,t}
    +
    \psi_1\varepsilon_{i,t-1}
    +
    \psi_2\varepsilon_{i,t-2},
\]
where
\[
    \psi_0=1,\qquad
    \psi_1=\theta_1-1,\qquad
    \psi_2=-\theta_1.
\]
The same logic applies for general $k$, with income growth having an MA$(k+1)$ transitory component.

Define
\[
    \boldsymbol{\alpha}_{i,t}
    =
    \begin{pmatrix}
    \eta_{i,t}\\
    \varepsilon_{i,t}\\
    \varepsilon_{i,t-1}\\
    \varepsilon_{i,t-2}
    \end{pmatrix}.
\]
Then the MA(1) model can be written as
\[
    \boldsymbol{\alpha}_{i,t}=T\boldsymbol{\alpha}_{i,t-1}+R\mathbf{u}_{i,t},
    \qquad
    \mathbf{u}_{i,t}
    =
    \begin{pmatrix}
    \eta_{i,t}\\
    \varepsilon_{i,t}
    \end{pmatrix},
    \qquad
    \operatorname{Var}(\mathbf{u}_{i,t})
    =
    \begin{pmatrix}
    \sigma_\eta^2&0\\
    0&\sigma_\varepsilon^2
    \end{pmatrix},
\]
and
\[
    T
    =
    \begin{pmatrix}
    0&0&0&0\\
    0&0&0&0\\
    0&1&0&0\\
    0&0&1&0
    \end{pmatrix},
    \qquad
    R
    =
    \begin{pmatrix}
    1&0\\
    0&1\\
    0&0\\
    0&0
    \end{pmatrix}.
\]
The observation equation is
\[
    \Delta\log Y_{i,t}=\mathbf{Z}\boldsymbol{\alpha}_{i,t},
    \qquad
    \mathbf{Z}=(1,\psi_0,\psi_1,\psi_2).
\]
Here $\mathbf{Z}$ is the $1\times4$ row vector that maps the state into observed income growth. The matrices $T$ and $R$ shift lagged transitory shocks through the state vector and inject the new permanent and transitory innovations.

\subsection{Kalman Filter and Smoother Implementation}
\label{app:kalman_implementation}
Given $(\theta_1,\sigma_\varepsilon^2,\sigma_\eta^2)$, the objects computed by the filter are the one-sided linear prediction of the state and its prediction-error covariance. Let
\[
    \widehat{\boldsymbol{\alpha}}_{i,t\mid s}
    \equiv
    \operatorname{Proj}(\boldsymbol{\alpha}_{i,t}\mid \Delta\log Y_{i,1:s}),
    \qquad
    P_{t\mid s}
    \equiv
    \operatorname{Var}(\boldsymbol{\alpha}_{i,t}-\widehat{\boldsymbol{\alpha}}_{i,t\mid s}).
\]
Under joint normality, $\widehat{\boldsymbol{\alpha}}_{i,t\mid s}$ is also the conditional expectation $E(\boldsymbol{\alpha}_{i,t}\mid \Delta\log Y_{i,1:s})$.
The one-step-ahead prediction from information through $t-1$ is
\begin{align}
    \widehat{\boldsymbol{\alpha}}_{i,t\mid t-1}
    &=
    T\widehat{\boldsymbol{\alpha}}_{i,t-1\mid t-1},
    &
    P_{t\mid t-1}
    &=
    TP_{t-1\mid t-1}T'
    +
    R\operatorname{Var}(\mathbf{u}_{i,t})R'.
    \label{eq:kalman_prediction}
\end{align}
After observing $\Delta\log Y_{i,t}$, the filter computes the forecast error, its variance, and the Kalman gain,
\[
    v_{i,t}
    =
    \Delta\log Y_{i,t}
    -
    \mathbf{Z}\widehat{\boldsymbol{\alpha}}_{i,t\mid t-1},
    \qquad
    F_t
    =
    \mathbf{Z}P_{t\mid t-1}\mathbf{Z}',
    \qquad
    \mathbf{K}_t
    =
    P_{t\mid t-1}\mathbf{Z}'F_t^{-1}.
\]
Here $\mathbf{K}_t$ is the $4\times1$ Kalman-gain vector that maps the scalar forecast error into the state update.
The filtered state is then
\begin{align}
    \widehat{\boldsymbol{\alpha}}_{i,t\mid t}
    &=
    \widehat{\boldsymbol{\alpha}}_{i,t\mid t-1}
    +
    \mathbf{K}_tv_{i,t},
    &
    P_{t\mid t}
    &=
    P_{t\mid t-1}
    -
    \mathbf{K}_tF_t\mathbf{K}_t'.
    \label{eq:kalman_update}
\end{align}
Thus the filter uses only current and past income growth. It delivers the best one-sided estimate of the state at each date, together with the uncertainty in that estimate.

The fixed-interval smoother then works backward from $T_i$ to date 1. Define
\[
    J_t
    =
    P_{t\mid t}T'P_{t+1\mid t}^{-1}.
\]
The Rauch--Tung--Striebel recursion computes the two-sided state mean and covariance as
\begin{align}
    \widehat{\boldsymbol{\alpha}}_{i,t\mid T_i}
    &=
    \widehat{\boldsymbol{\alpha}}_{i,t\mid t}
    +
    J_t
    \left(
    \widehat{\boldsymbol{\alpha}}_{i,t+1\mid T_i}
    -
    \widehat{\boldsymbol{\alpha}}_{i,t+1\mid t}
    \right),
    \label{eq:rts_smoother_mean}\\
    P_{t\mid T_i}
    &=
    P_{t\mid t}
    +
    J_t
    \left(
    P_{t+1\mid T_i}
    -
    P_{t+1\mid t}
    \right)
    J_t'.
    \label{eq:rts_smoother_variance}
\end{align}
This second pass is what makes future income growth informative about the date-$t$ shock: observations after $t$ help distinguish a persistent permanent innovation from the finite MA footprint of a transitory innovation.

The recovered shocks used in the second stage are the first two elements of the smoothed state,
\begin{equation}
    \widehat{\eta}_{i,t}
    =
    \mathbf{e}_1'\widehat{\boldsymbol{\alpha}}_{i,t\mid T_i},
    \qquad
    \widehat{\varepsilon}_{i,t}
    =
    \mathbf{e}_2'\widehat{\boldsymbol{\alpha}}_{i,t\mid T_i},
    \label{eq:kalman_recovered_shocks}
\end{equation}
where $\mathbf{e}_1$ and $\mathbf{e}_2$ select the permanent and current transitory innovations from $\boldsymbol{\alpha}_{i,t}$. These are the regressors in Equation \eqref{eq:second_stage_estimation}.

Under joint normality, these smoothed states are conditional expectations. Without normality, the same algorithm computes the best linear predictions implied by the second moments of the state-space model.

\section{Minimum-Variance Linear Signal Result}
\label{app:minvar_signal}

\subsection{Setup and Admissible Estimator Class}
This appendix states the projection argument behind Propositions \ref{prop:passthrough_identification} and \ref{prop:estimation_minvar}. Let
\[
    \mathbf{y}_i
    =
    (\Delta\log Y_{i,1},\ldots,\Delta\log Y_{i,T_i})'
\]
denote the observed income-growth history for household $i$. Fix a date $t$ and let $\varepsilon_{i,t}$ denote the latent transitory shock. The permanent-shock component and the state controls are either included directly in the regression or partialled out. To keep notation compact, write the remaining pass-through equation as
\begin{equation}
    \widetilde{\Delta\log C}_{i,t}
    =
    \gamma\varepsilon_{i,t}
    +
    e_{i,t}.
\label{eq:appendix_residual_passthrough}
\end{equation}
The projection signal is
\begin{equation}
    \widehat{\varepsilon}_{i,t}
    \equiv
    \operatorname{Proj}(\varepsilon_{i,t}\mid \mathbf{y}_i).
\label{eq:appendix_projection_signal}
\end{equation}
Writing $\Sigma_y\equiv\operatorname{Var}(\mathbf{y}_i)$ and $\mathbf{c}_t\equiv\operatorname{Cov}(\mathbf{y}_i,\varepsilon_{i,t})$, the projection in \eqref{eq:appendix_projection_signal} is
\begin{equation}
    \widehat{\varepsilon}_{i,t}
    =
    \mathbf{c}_t'\Sigma_y^{-1}\mathbf{y}_i.
\label{eq:appendix_projection_signal_formula}
\end{equation}

The comparison class consists of IV or GMM procedures whose identifying signal is linear in the income history. For a scalar signal $Z_{i,t}=\mathbf{a}'\mathbf{y}_i$, the population IV estimand is
\begin{equation}
    \gamma(Z)
    =
    \frac{\operatorname{Cov}(Z_{i,t},\widetilde{\Delta\log C}_{i,t})}
         {\operatorname{Cov}(Z_{i,t},\varepsilon_{i,t})}.
\label{eq:appendix_scalar_iv_estimand}
\end{equation}
This notation also covers linear GMM procedures. If $\mathbf{Z}_{i,t}$ is a vector of linear income-history moments and $W$ is the population weighting matrix, the corresponding one-parameter GMM estimand is the ratio in \eqref{eq:appendix_scalar_iv_estimand} evaluated at the composite scalar signal
\begin{equation}
    S_{i,t}
    =
    \operatorname{Cov}(\mathbf{Z}_{i,t},\varepsilon_{i,t})'W\mathbf{Z}_{i,t}.
\label{eq:appendix_gmm_composite_signal}
\end{equation}
Thus, within the admissible class, GMM differs from scalar IV only in the choice of linear combination of income-history signals.

We maintain that, after conditioning on the state and the permanent-shock component, $e_{i,t}$ is orthogonal to all such income-history signals.

\subsection{Identification of the Projection Estimand}
\begin{proof}[Proof of Proposition \ref{prop:passthrough_identification}]
After partialling out the permanent shock and state controls, the population coefficient on $\widehat{\varepsilon}_{i,t}$ is
\begin{equation}
    \frac{
    \operatorname{Cov}(\widehat{\varepsilon}_{i,t},
    \widetilde{\Delta\log C}_{i,t})
    }{
    \operatorname{Var}(\widehat{\varepsilon}_{i,t})
    }.
\label{eq:appendix_projection_coefficient}
\end{equation}
Using the maintained pass-through equation \eqref{eq:appendix_residual_passthrough}, the coefficient in \eqref{eq:appendix_projection_coefficient} equals
\begin{equation}
    \gamma
    \frac{
    \operatorname{Cov}(\widehat{\varepsilon}_{i,t},\varepsilon_{i,t})
    }{
    \operatorname{Var}(\widehat{\varepsilon}_{i,t})
    }
    +
    \frac{
    \operatorname{Cov}(\widehat{\varepsilon}_{i,t},e_{i,t})
    }{
    \operatorname{Var}(\widehat{\varepsilon}_{i,t})
    }.
\label{eq:appendix_projection_coefficient_decomposition}
\end{equation}
The second term in \eqref{eq:appendix_projection_coefficient_decomposition} is zero because $\widehat{\varepsilon}_{i,t}$ is a linear function of $\mathbf{y}_i$ and the residual is orthogonal to admissible income-history signals. The first ratio in \eqref{eq:appendix_projection_coefficient_decomposition} is one because the projection residual $\varepsilon_{i,t}-\widehat{\varepsilon}_{i,t}$ is orthogonal to every linear function of $\mathbf{y}_i$, including $\widehat{\varepsilon}_{i,t}$. Hence
\[
    \operatorname{Cov}(\widehat{\varepsilon}_{i,t},\varepsilon_{i,t})
    =
    \operatorname{Var}(\widehat{\varepsilon}_{i,t}).
\]
\end{proof}

If the permanent shock is omitted, the same calculation gives \eqref{eq:omitted_permanent_bias}, which is why the empirical specification includes the recovered permanent-shock component rather than treating the transitory signal as the only relevant income innovation.

\subsection{Minimum-Variance Linear Income-History Signal}
\begin{proof}[Proof of Proposition \ref{prop:estimation_minvar}]
For any admissible scalar instrument $Z_{i,t}=\mathbf{a}'\mathbf{y}_i$, the standard IV expansion gives
\begin{equation}
    \operatorname{Avar}(\widehat{\gamma}_0(Z))
    \propto
    \frac{\operatorname{Var}(Z_{i,t})}
         {\operatorname{Cov}(Z_{i,t},\varepsilon_{i,t})^2}
    =
    \frac{1}
         {\sigma_\varepsilon^2
         \rho^2(Z_{i,t},\varepsilon_{i,t})}.
\label{eq:appendix_scalar_instrument_avar}
\end{equation}
Thus minimizing asymptotic variance is equivalent to maximizing the squared correlation between $Z_{i,t}$ and $\varepsilon_{i,t}$.

For $Z_{i,t}=\mathbf{a}'\mathbf{y}_i$,
\[
    \rho^2(Z_{i,t},\varepsilon_{i,t})
    =
    \frac{(\mathbf{a}'\mathbf{c}_t)^2}
         {(\mathbf{a}'\Sigma_y \mathbf{a})\sigma_\varepsilon^2}.
\]
By Cauchy--Schwarz in the inner product induced by $\Sigma_y$,
\[
    (\mathbf{a}'\mathbf{c}_t)^2
    \le
    (\mathbf{a}'\Sigma_y \mathbf{a})(\mathbf{c}_t'\Sigma_y^{-1}\mathbf{c}_t),
\]
with equality if and only if $\mathbf{a}$ is proportional to $\Sigma_y^{-1}\mathbf{c}_t$. Therefore the maximal squared correlation is attained by
\begin{equation}
    Z^*_{i,t}
    =
    \mathbf{c}_t'\Sigma_y^{-1}\mathbf{y}_i
    =
    \widehat{\varepsilon}_{i,t}.
\label{eq:appendix_optimal_linear_signal}
\end{equation}
Substituting \eqref{eq:appendix_optimal_linear_signal} into \eqref{eq:appendix_scalar_instrument_avar} gives \eqref{eq:minvar_ratio}.
\end{proof}

\section{Measurement Error Simulations}
\label{app:measurement_error_simulations}

The simulations evaluate the premise that income is measured accurately enough for the projection estimator to be useful. For this exercise, we assume that transitory income is MA(1) in levels, as in our baseline sample, and take the income-process parameters from Table~\ref{tab:income_process_moments} and the estimated consumption-function parameters from Figure~\ref{fig:pass_through_profiles_ma1}. The robust BPP-C estimator therefore uses the two-period-ahead income growth instrument, $\Delta \log Y_{i,t+2}$. We compare BPP-C to a naive linear-projection estimator that ignores income measurement error when constructing the Kalman-smoothed shock. 

We model measurement error in log income as an i.i.d. classical shock $\xi_t$, so that observed income growth contains the MA(1) component $\xi_t-\xi_{t-1}$. The noise share is
\begin{equation}
    \frac{\operatorname{Var}(\xi_t-\xi_{t-1})}
    {\operatorname{Var}(\Delta \log y^{obs}_t)}
    =
    \frac{2\sigma_\xi^2}
    {\operatorname{Var}(\Delta \log y^{obs}_t)}.
\label{eq:income_noise_share}
\end{equation}
The scale in \eqref{eq:income_noise_share} is chosen to match the validation evidence. In survey settings such as the PSID, where income and consumption are collected separately, validation studies by \cite{Bound1994} and \cite{Pischke1995} imply substantial measurement error in earnings changes, roughly in the range of 15--20 percent. In Nordic administrative registers, the relevant range is much lower. \cite{KapteynYpma2007}, \cite{MeijerRohwedderWansbeek2012}, and \cite{Kristensen_Westergaard_Nielsen_2007} point to very accurate register earnings conditional on correct matching, suggesting that 0--2 percent is the empirically plausible range for population-wide register applications.

These validation facts motivate two simulation exercises. First, we present a simulation exercise that mimics survey data, where income and consumption measurement errors are i.i.d. and independent. The independent-error simulation varies the noise share from 0 to 20 percent, covering the survey range. Second, we present a simulation exercise that mimics budget-constraint imputation (as in our data), where income measurement error mechanically enters imputed consumption. The budget-constraint imputation simulation focuses on 0 to 5 percent, because the register concern is not mainly that income error is large. It is that any remaining income error enters imputed consumption mechanically through the household budget constraint, creating a cross-equation correlation between measurement error in observed income and observed consumption.

Figure \ref{fig:bias_variance_measurement_error} summarizes the simulation results. The figure reports the estimator behavior over repeated samples as the income-noise share varies. Each point in the figure is based on 1,000 simulated samples with 30,421 observations per sample. It illustrates the bias-variance trade-off: the projection estimator uses a stronger signal but is exposed to misspecification when income measurement error is ignored, while BPP-C remains robust but becomes imprecise.

\begin{figure}[t]
    \centering
    \caption{Bias-variance trade-off with income measurement error}
    \label{fig:bias_variance_measurement_error}
    \includegraphics[width=0.86\textwidth]{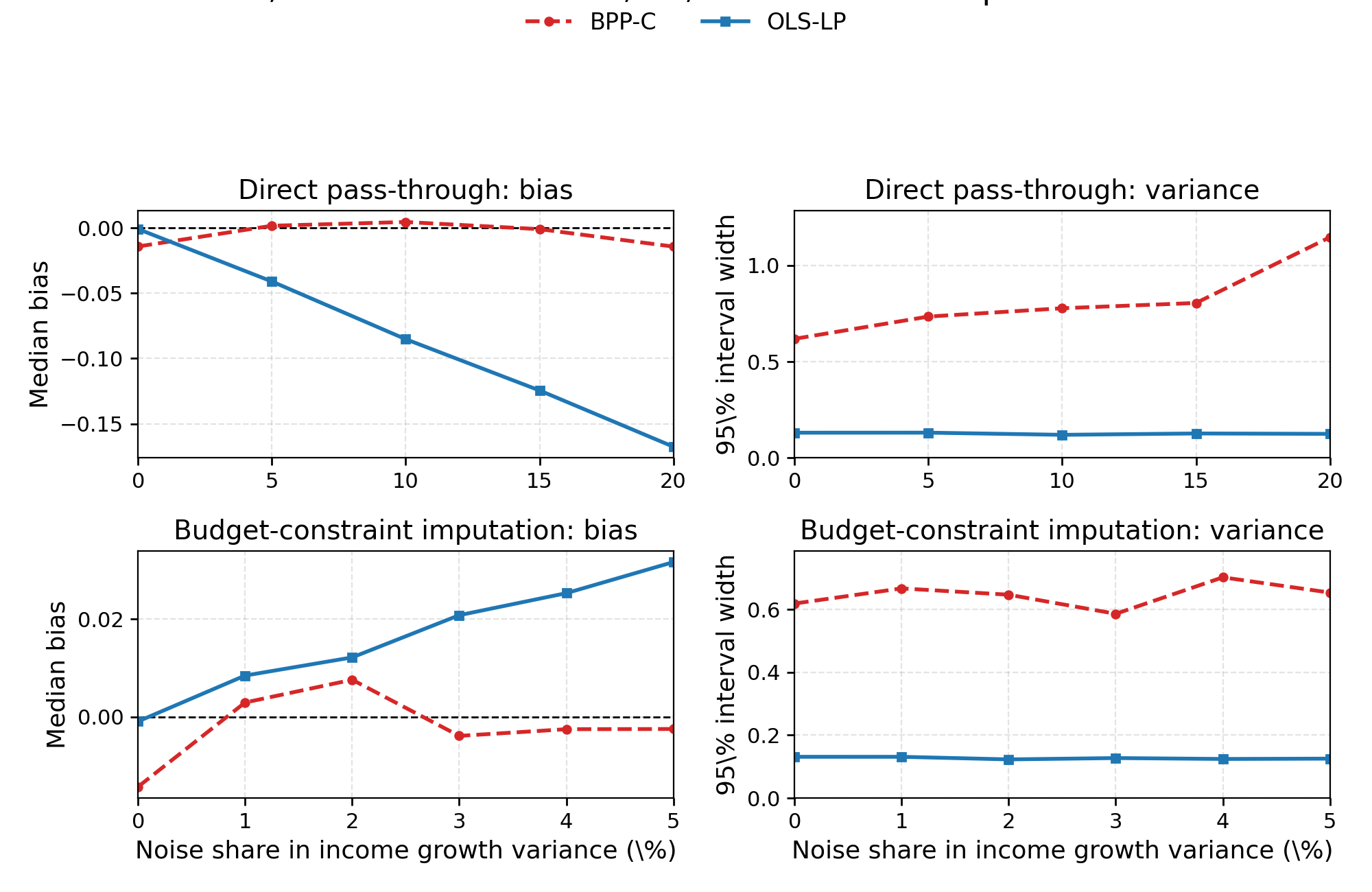}
    \figurenote{Each point uses 1,000 simulated samples with 30,421 observations per sample.}
\end{figure}

\subsection{Independent Measurement Error}
The first simulation corresponds to a survey setting in which income and consumption are measured separately. Observed income growth is contaminated by an i.i.d. income error $\xi_t-\xi_{t-1}$, while observed consumption growth has an independent measurement error:
\begin{align}
    \Delta \log y^{obs}_t
    =
    \eta_t+\varepsilon_t-(1-\theta_1)\varepsilon_{t-1}
    -\theta_1\varepsilon_{t-2}
    +\xi_t-\xi_{t-1},
    \label{eq:measurement_error_income_growth}\\
    \Delta \log c^{obs}_t
    =
    \lambda\eta_t+\gamma\varepsilon_t+\zeta_t-\zeta_{t-1}.
    \label{eq:independent_measurement_error_consumption_growth}
\end{align}
The BPP-C estimator remains centered on $\gamma$ because the two-period-ahead instrument is not contaminated by the contemporaneous MA(1) measurement-error term. The cost is precision: as the noise share rises, the transitory-shock variance consistent with the observed income moments falls, weakening the long-horizon first stage. The naive projection estimator is instead attenuated toward zero because income measurement error inflates the variance of the recovered transitory signal without creating a corresponding covariance with consumption growth.

\subsection{Budget-Constraint Imputation}
We impute consumption from income and wealth using the household budget constraint. This means that any measurement error in income enters observed consumption mechanically:
\begin{equation}
    \Delta \log c^{obs}_t
    =
    \lambda\eta_t+\gamma\varepsilon_t+\zeta_t-\zeta_{t-1}
    +\xi_t-\xi_{t-1}.
\label{eq:budget_constraint_measurement_error_consumption_growth}
\end{equation}
The same $\xi_t$ therefore appears in both observed income growth \eqref{eq:measurement_error_income_growth} and observed consumption growth \eqref{eq:budget_constraint_measurement_error_consumption_growth}. BPP-C remains robust because the contamination is concentrated at short lags. The naive projection estimator now faces two forces. The contaminated income history still attenuates the recovered shock, but the same contamination also enters imputed consumption and raises the numerator of the pass-through regression. In the simulation, the second force dominates over the register-relevant range, so the naive projection estimator overstates $\gamma$ as the income noise share rises.

\section{Data Appendix}
\label{app:data}

\subsection{Register Sources and Linkages}

The longitudinal dataset LISA (Longitudinal Integrations Database for Sickness Insurance and Labour Market Studies) is provided by Statistics Sweden and merges several administrative and tax registers for the universe of Swedish individuals aged 16 and above. LISA contains annual information on labor earnings, capital income, transfers, taxes, and socio-demographic variables such as age, education, occupation, marital status, and municipality of residence.

We link LISA to complementary asset registers. The wealth register (\emph{F\"orm\"ogenhetsregistret}) covers the period 1999--2007, when Sweden levied a comprehensive wealth tax and the tax administration collected balance-sheet information for the entire population. It reports end-of-year balance values for financial assets, listed securities, real estate, and liabilities. On the asset side, it provides aggregate positions by instrument, such as bank deposits, bonds, equities, and mutual funds. On the liability side, it includes total outstanding debt, including mortgages, consumer credit, and student loans. Financial institutions were legally required to report this information directly to the tax authority for the administration of the wealth tax.

The KURU register provides more disaggregated security-level information. It records holdings by International Securities Identification Number (ISIN), including the quantity held for securities that Statistics Sweden can match to a market price. Housing registers include \textit{Fastighetsprisstatistiken}, which records transaction prices for owned real estate, and KU55, which records co-operative-apartment transactions. These sources are merged to the individual and household level using personal identification numbers and dwelling identifiers.

\subsection{Consumption Imputation}

We impute consumption from the household budget constraint,
\begin{equation}
    C_{i,t}=Y_{i,t}-S_{i,t},
    \label{eq:BC}
\end{equation}
where $C_{i,t}$ is consumption, $Y_{i,t}$ is disposable income, and $S_{i,t}$ is savings. To construct our measure of disposable income, we start from the LISA measure of income, which includes employment-related earnings, including wages, vacation pay, severance payments, and wage income for self-employed workers and business owners as well as capital income. Business losses for small business owners and independent contractors are included from 2004 onward. We add government transfers, including unemployment benefits, sickness and disability benefits, pension benefits, housing benefits, child benefits, and smaller transfer programs, and deduct all taxes. We subtract the change in cumulative student debt, since these are automatic payments, and capital income, since it is endogenous to the consumption--savings decision and would make identification of transitory shocks problematic. Our final measure of disposable income is the same as in \citet{Kolsrud_et_al_2020_JPUB}, except that our subtraction of capital income is based on the summary measure ``kapinc'', net of taxes, whereas they use the gross value of the more narrow measure ``KIRANTA'', which does not include capital gains.

Conceptually, savings should capture \emph{active} rebalancing---net purchases of assets and net debt repayment---and should not mechanically reflect \emph{passive} changes in wealth due to asset price movements. This distinction matters because the registers typically report end-of-year positions, sometimes in quantities and sometimes only in values, while transactions and intra-year trading are not directly observed. We therefore use two complementary approaches.

\paragraph{Quantity approach.}
When end-of-year holdings are observed in quantities $A_{i,k,t}$ for asset $k$ and a corresponding market price $p_{k,t}$ is available, net purchases are
\[
S^{Q}_{i,k,t}=p_{k,t}\left(A_{i,k,t}-A_{i,k,t-1}\right).
\]
An increase in holdings during year $t$ is treated as saving and therefore reduces imputed consumption, whereas a reduction in holdings is treated as dissaving and increases imputed consumption. If the household holds the same quantity over the year, $S^{Q}_{i,k,t}=0$ even when prices move. The main limitation is that quantities are observed only at the end of the year, so this approach values the net annual change in holdings at the end-of-year price and abstracts from the timing of trades within the year.

\paragraph{Price approach.}
When only balance values $W_{i,k,t}=p_{k,t}A_{i,k,t}$ are observed, or asset-specific prices are unavailable, we infer net purchases by subtracting passive valuation changes:
\[
	S^{P}_{i,k,t}=\Delta W_{i,k,t} -\frac{\Delta p_{k,t}}{p_{k,t-1}} W_{i,k,t-1}.
\]
Here $\Delta X_t \equiv X_t-X_{t-1}$. The term $(\Delta p_{k,t}/p_{k,t-1})W_{i,k,t-1}$ is the change in wealth that would arise purely from applying the asset price movement to last year's position. The residual $S^{P}_{i,k,t}$ is interpreted as the household's net purchase flow. If an asset-specific price is unavailable, we use a category-level price index as in \cite{Kolsrud_et_al_2020_JPUB}. When quantities and prices are observed consistently for the same security, the two approaches are algebraically equivalent; in practice they differ because some holdings are observed only in aggregates or prices are not matchable at the relevant level of detail.

\paragraph{Financial assets.}
Fixed-price instruments, such as bank accounts, money-market funds, and similar products, enter through changes in reported end-of-year balances. For bank accounts, consumption out of these assets is inferred from balance changes between years $t-1$ and $t$, together with interest income on deposits, which is already included in the income measure. Liabilities are treated analogously: changes in outstanding debt, net of interest payments that are already part of the income measure, are interpreted as consumption financed through borrowing or saving through debt repayment.

For stocks, bonds, and mutual funds, we supplement the wealth register with KURU. Swedish financial institutions must report each client's ISIN-level holdings and the quantity held. Statistics Sweden registers securities for which it can match a market price, so the register omits non-listed assets such as private business equity or shares in closely held, non-traded companies. Existing evidence, see \cite{Bach2020}, suggests that such unlisted securities are negligible for the vast majority of households in Sweden and only represent a substantial share of wealth for a very small fraction of the population at the absolute top of the wealth distribution. For listed financial instruments, we use closing prices from LSEG Refinitiv Eikon Datastream on the last trading day of December. Datastream covers a broad set of instruments across major asset classes, including active and delisted exchange-traded instruments. For securities in KURU that cannot be matched to a price, we use the price approach with an aggregate price index.

For options, warrants, and tax-favoured capital-insurance accounts, we observe the total balance rather than quantities. We therefore apply a price-based method with an aggregate price index to impute the associated consumption flows. For tax-deferred private pension accounts, we observe contributions and withdrawals but not the underlying balance. Since the object of interest is the consumption flow, these inflows and outflows are sufficient to construct the relevant series.

For the stock of total financial wealth, we use the maximum of financial wealth extracted from Datastream and financial wealth in the wealth register. This uses the most accurate source for each individual, as it minimizes the problem of omitting the value of instruments whose value was missing in the wealth registry. Datastream financial wealth is substantially larger than wealth-register financial wealth, mostly because Datastream covers foreign-traded stocks and other instruments that Statistics Sweden did not fully value in the wealth register. The difference is concentrated among wealthy individuals; after trimming the top one percent by financial wealth, Datastream financial wealth remains roughly twice as large as the wealth-register measure. The stock of total financial wealth is used to stratify the population in heterogeneity analysis, not to impute consumption.

\paragraph{Real assets.}
We separate real assets into owned real estate and co-operative apartments. Owned real estate includes stand-alone houses, semi-detached houses, bungalows, terrace houses, second homes, commercial real estate, apartment buildings, industrial plants, and farming properties. Co-operative apartments are treated separately because Swedish apartment residents commonly buy a share in a housing co-operative rather than legally owning the property itself.

For owned real estate, we use Fastighetsprisstatistiken to observe acquisitions and sales and classify transactions into primary housing, second homes, farms, and commercial real estate. Let $K^H_{i,t}$ denote acquisitions and $S^H_{i,t}$ sales. The real-estate saving flow entering the budget constraint is
\[
    \Delta H_{i,t}=K^H_{i,t}-S^H_{i,t},
\]
so net purchases reduce imputed consumption and net sales increase it. The implementation differs by period because registry detail changes. For 2004 onward, all parties involved in each transfer are observed, so transaction values can be assigned at the individual level using party-level participation. For 2000--2003, party information is more limited, so household-level transaction values are allocated using ownership shares from the wealth register, using contemporaneous shares on the purchase side and lagged shares on the sale side.

For co-operative apartments, KU55 records sales with transferred shares and sale values, which we use to construct individual-level sale proceeds. These are combined with housing flows to construct a co-op-adjusted housing flow from 2004 onward. For 2005--2007, we additionally impute missing co-op purchase values for individuals with persistent positive co-op wealth in adjacent years. The imputation has two steps. First, we estimate a purchase-probability model using household demographics, disposable income, changes in household size, debt and co-op wealth changes, year effects, age-bin controls, regional controls, and indicators for related housing transactions. Second, conditional on observables, we estimate expected purchase prices and form an imputed co-op flow as the difference between observed sale values and predicted purchase values for individuals with sufficiently high predicted purchase probability. The imputed flow is scaled by observed ownership shares and merged back to the person-year panel.

We also account for housing services through imputed rents. Following the approach used by Statistics Sweden for national accounts, imputed rent is based on potential rental income from subletting the house, computed as square-metre living space multiplied by a square-metre rental value that varies by region, construction period, and dwelling category. Since individual imputed rents are not observed in our data, we assign aggregate imputed rents to households based on the value of their primary real estate in the wealth tax register.

\subsection{Consumption Validation}
\label{app:data_consumption_validation}

Figure~\ref{fig:aggregate_cons} compares our registry-based per-capita consumption measure to the national-accounts series and the Household Budget Survey (HUT). The national-accounts series is obtained by taking nominal household consumption expenditure under ESA2010, in current SEK millions by durability and year, and dividing by the number of individuals in MONA. The HUT series is obtained by dividing household consumption by household size and weighting observations by survey weights. All series are expressed in 2003 SEK.

The registry measure lies between the survey and national-accounts series throughout the sample period. The survey provides direct measures of bi-weekly consumption expenditures at the time the household is surveyed. In the national accounts, consumption is computed from several sources, including HUT survey information, registry information, and VAT receipts. A main difference relative to the national accounts is that some investments are treated as expenditures in our measure and as investment in the national accounts. In contrast with survey data, both the national accounts and the registry measure include imputed rents. The three series are therefore broadly comparable in levels over the sample period.

\begin{figure}[t]
    \centering
    \caption{Registry consumption, national accounts, and household budget survey (HUT)}
    \label{fig:aggregate_cons}
    \includegraphics[width=0.86\textwidth]{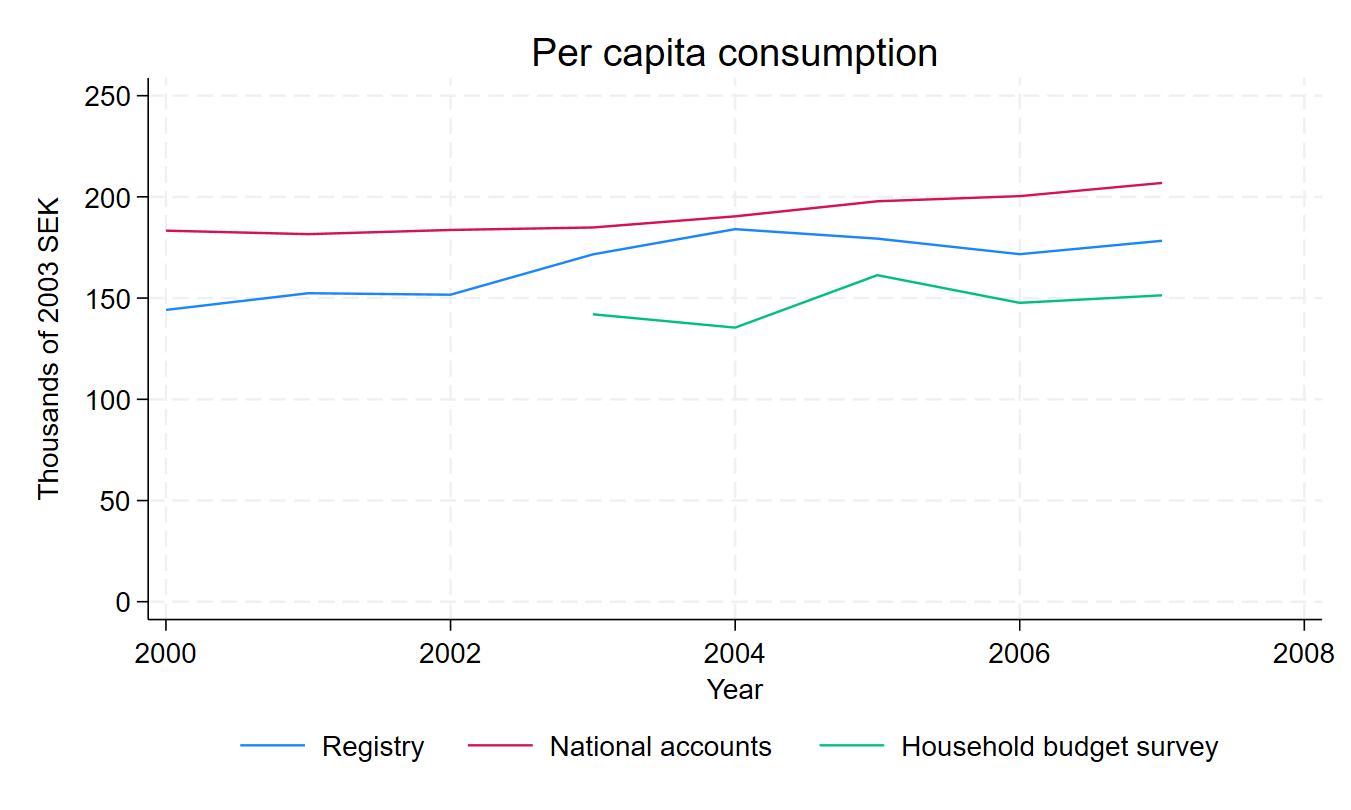}
    \figurenote{The figure reports per-capita consumption. All series are in 2003 SEK.}
\end{figure}

\subsection{Income and Wealth Variables}
\label{sec:vars_def}

\paragraph{Disposable income ($Y_{i,t}$).}
Annual disposable income sums labor earnings and government transfers, net of taxes. Labor earnings include wages, vacation pay, severance payments, and wage income for self-employed workers and business owners. For the income measure used in the income shocks estimation, we subtract capital income because it is endogenous to consumption and saving decisions.

\paragraph{Consumption ($C_{i,t}$).}
Registry consumption is imputed from Equation~\eqref{eq:BC} using the savings flows described above.

\paragraph{Liquid wealth ($A^{\mathrm{liq}}_{i,t}$).}
Liquid wealth is financial wealth plus checking-account balances plus capital-insurance wealth.

\paragraph{Net illiquid wealth ($A^{\mathrm{illiq}}_{i,t}$).}
Net illiquid wealth is real estate assets minus total debt and cumulative student debt.

\paragraph{Net total wealth ($A^{\mathrm{net}}_{i,t}$).}
Net total wealth is financial wealth plus checking-account balances plus capital-insurance wealth plus real estate assets, minus total debt and cumulative student debt:
\[
    A^{\mathrm{net}}_{i,t}
    =
    A^{\mathrm{fin}}_{i,t}
    + A^{\mathrm{checking}}_{i,t}
    + A^{\mathrm{capins}}_{i,t}
    + A^{\mathrm{real}}_{i,t}
    - \mathrm{Debt}_{i,t}
    - \mathrm{StudentDebt}_{i,t}.
\]

\paragraph{Cash-on-hand ($M_{i,t}$).}
Cash-on-hand at the start of year $t$ is
\[
M_{i,t}=(1+r)A^{\mathrm{liq}}_{i,t}+Y_{i,t}.
\]
The term $(1+r)A^{\mathrm{liq}}_{i,t}$ is the end-of-year-$t-1$ liquid-asset position observed in the registers; returns over $t-1$ are already embedded in this stock, so the interest rate $r$ is not observed or imputed separately.

\paragraph{Normalized cash-on-hand ($m_{i,t}$).}
Normalized cash-on-hand divides cash-on-hand by smoothed permanent income:
\[
m_{i,t}=\frac{M_{i,t}}{P_{i,t}},
\]
where $P_{i,t}$ is the permanent-income component recovered from the estimated income process. 

\subsection{Sample Construction and Restrictions}

The imputation of consumption is carried out on the full population. For the main analysis, we restrict the sample to individuals aged 30--65 with non-missing income, consumption, and wealth data. We drop observations with non-positive imputed consumption. We also exclude observations with extreme income or consumption growth rates.

\paragraph{Singles (baseline).}
The baseline \emph{Singles} sample comprises one-person households, excluding married or registered-partner cases, that satisfy the age and data-quality restrictions above.

\paragraph{Households.}
The \emph{Households} sample includes household heads, defined as the highest-income earner, in stable household structures. Income, consumption, and wealth are aggregated to the household level.

\paragraph{Housing and co-operative-apartment exclusions.}
For both subsamples, we exclude observations involved in housing or co-operative-apartment transactions before estimating Equation \eqref{eq:second_stage_estimation}. This restriction aligns the sample with the identifying restriction in the main text by focusing on periods without active illiquid-asset adjustment.

\paragraph{Detrended growth rates.}
For estimation, we prepare balanced lead-lag moments and detrended growth components. We residualize log income and log consumption using flexible controls: time effects, cohort, family composition, employment status, education-time interactions, region effects, and lagged controls. The corresponding growth rates are then built from first differences and enter the income-process estimation and the second-stage MPC regressions.

\section{Additional Results and Robustness}
\label{app:additional_results}

\subsection{Household and MA(2) Income-Process Estimates}

Table~\ref{tab:income_process_ma2} reports the MA(2) income-process estimates used in the robustness exercises. The estimated second MA parameter is small relative to the first: $\theta_2=0.0694$ for Singles and $\theta_2=0.0642$ for Households. The income-growth autocovariance at the third lead is correspondingly small. The MA(2) specification therefore allows a richer transitory-income footprint, but it does not point to a fundamentally different income process.

\begin{table}[t]
    \centering
    \caption{MA(2) income-process moments and parameter estimates}
    \label{tab:income_process_ma2}
    \begin{minipage}[t]{0.48\textwidth}
        \centering
        \caption*{A. Singles: moments}
        \resizebox{\linewidth}{!}{\begin{tabular}{l c c c c}
\hline\hline
 & $\Delta \ln y_t$ & $\Delta \ln y_{t+1}$ & $\Delta \ln y_{t+2}$ & $\Delta \ln y_{t+3}$ \\
\hline
$\Delta \ln y_t$ & \shortstack[c]{0.0297\\(0.0295, 0.0299)} & \shortstack[c]{-0.0072\\(-0.0073, -0.0070)} & \shortstack[c]{-0.0026\\(-0.0027, -0.0025)} & \shortstack[c]{-0.0009\\(-0.0010, -0.0008)} \\
$\Delta c_t$ & \shortstack[c]{0.0214\\(0.0212, 0.0216)} & \shortstack[c]{-0.0035\\(-0.0037, -0.0034)} & \shortstack[c]{-0.0023\\(-0.0024, -0.0021)} & \shortstack[c]{-0.0006\\(-0.0008, -0.0005)} \\
\hline
\(N\) & \multicolumn{4}{c}{   1,155,414} \\
\hline\hline
\end{tabular}
}
    \end{minipage}\hfill
    \begin{minipage}[t]{0.48\textwidth}
        \centering
        \caption*{B. Singles: parameters}
        \resizebox{\linewidth}{!}{\begin{tabular}{l c c c c}
\hline\hline
 & $\theta_1$ & $\theta_2$ & $\sigma^2_{\epsilon}$ & $\sigma^2_{\eta}$ \\
\hline
Estimate & \shortstack[c]{0.3056\\(0.2982, 0.3130)} & \shortstack[c]{0.0694\\(0.0636, 0.0751)} & \shortstack[c]{0.0142\\(0.0140, 0.0145)} & \shortstack[c]{0.0077\\(0.0074, 0.0080)} \\
\(N\) & \multicolumn{4}{c}{   1,155,414} \\
\hline\hline
\end{tabular}
}
    \end{minipage}

    \vspace{1em}

    \begin{minipage}[t]{0.48\textwidth}
        \centering
        \caption*{C. Households: moments}
        \resizebox{\linewidth}{!}{\begin{tabular}{l c c c c}
\hline\hline
 & $\Delta \ln y_t$ & $\Delta \ln y_{t+1}$ & $\Delta \ln y_{t+2}$ & $\Delta \ln y_{t+3}$ \\
\hline
$\Delta \ln y_t$ & \shortstack[c]{0.0288\\(0.0286, 0.0289)} & \shortstack[c]{-0.0069\\(-0.0070, -0.0069)} & \shortstack[c]{-0.0026\\(-0.0027, -0.0026)} & \shortstack[c]{-0.0008\\(-0.0009, -0.0008)} \\
$\Delta c_t$ & \shortstack[c]{0.0218\\(0.0216, 0.0219)} & \shortstack[c]{-0.0036\\(-0.0037, -0.0035)} & \shortstack[c]{-0.0024\\(-0.0025, -0.0023)} & \shortstack[c]{-0.0006\\(-0.0007, -0.0005)} \\
\hline
\(N\) & \multicolumn{4}{c}{   2,150,175} \\
\hline\hline
\end{tabular}
}
    \end{minipage}\hfill
    \begin{minipage}[t]{0.48\textwidth}
        \centering
        \caption*{D. Households: parameters}
        \resizebox{\linewidth}{!}{\begin{tabular}{l c c c c}
\hline\hline
 & $\theta_1$ & $\theta_2$ & $\sigma^2_{\epsilon}$ & $\sigma^2_{\eta}$ \\
\hline
Estimate & \shortstack[c]{0.3041\\(0.2988, 0.3094)} & \shortstack[c]{0.0642\\(0.0601, 0.0683)} & \shortstack[c]{0.0137\\(0.0136, 0.0139)} & \shortstack[c]{0.0076\\(0.0074, 0.0078)} \\
\(N\) & \multicolumn{4}{c}{   2,150,175} \\
\hline\hline
\end{tabular}
}
    \end{minipage}
\end{table}

\subsection{MA(2) MPC Profiles}

Under MA(2), the state is $(m_{i,t-1},\varepsilon_{i,t-1},\varepsilon_{i,t-2})$. We partition observations into a $10\times5\times2$ grid and average across the two lagged-shock dimensions to obtain cash-on-hand decile profiles. Figure~\ref{fig:mpc_gradient_ma2} shows that the MA(2) estimates preserve the main result. The projection estimates decline sharply with cash-on-hand in both samples, and the future-income IV estimates remain substantially less precise. For the BPP-C estimates in this figure, we report (incorrect) non-bootstrap confidence intervals because bootstrapping the full MA(2) BPP-C estimation procedure is computationally very demanding.

\begin{figure}[t]
    \centering
    \caption{MPC by cash-on-hand, MA(2)}
    \label{fig:mpc_gradient_ma2}
    \includegraphics[width=\textwidth]{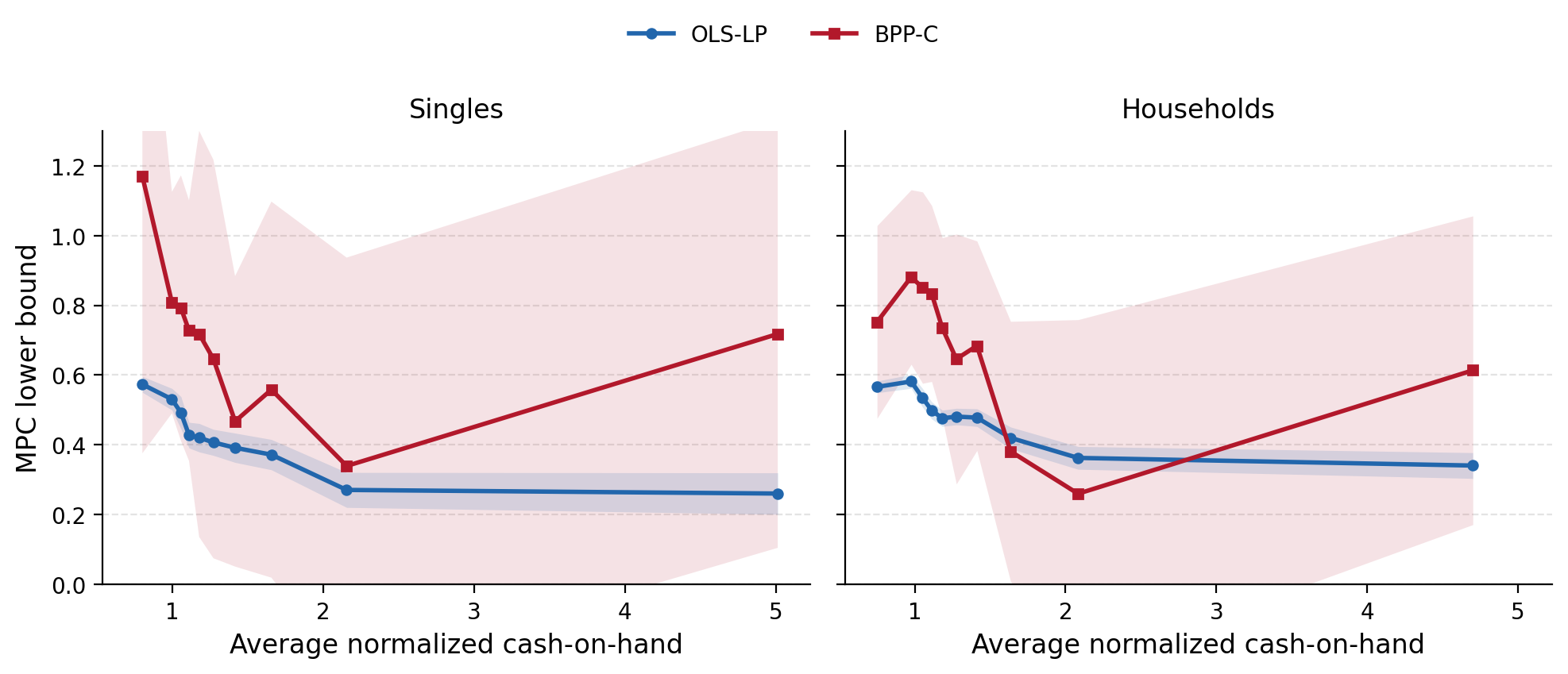}
    \figurenote{Estimates show lower bounds of the MPC, and are computed within decile bins of lagged normalized cash-on-hand; the horizontal axis plots the observation-count-weighted average normalized cash-on-hand in each bin. The projection estimator uses the Kalman-smoothed transitory shock and reports 95 percent confidence intervals based on bootstrap standard errors clustered at the household level. BPP-C uses the distant future-income-growth instrument and reports the (incorrect) conventional non-bootstrap confidence intervals because bootstrapping the full MA(2) BPP-C estimation procedure is computationally very demanding.}
\end{figure}

\subsection{Pass-Through Profiles}

Figure~\ref{fig:pass_through_profiles_ma2} reports the estimated MA(2) pass-through coefficients before the MPC conversion, corresponding to the baseline MA(1) profiles in Figure~\ref{fig:pass_through_profiles_ma1}. The transitory pass-through remains downward sloping in cash-on-hand for both Singles and Households, while the permanent-shock pass-through is comparatively flat. Thus the richer income-process specification preserves the pass-through pattern that generates the MPC gradient.

\begin{figure}[t]
    \centering
    \caption{Pass-through by cash-on-hand, MA(2)}
    \label{fig:pass_through_profiles_ma2}
    \includegraphics[width=0.92\textwidth]{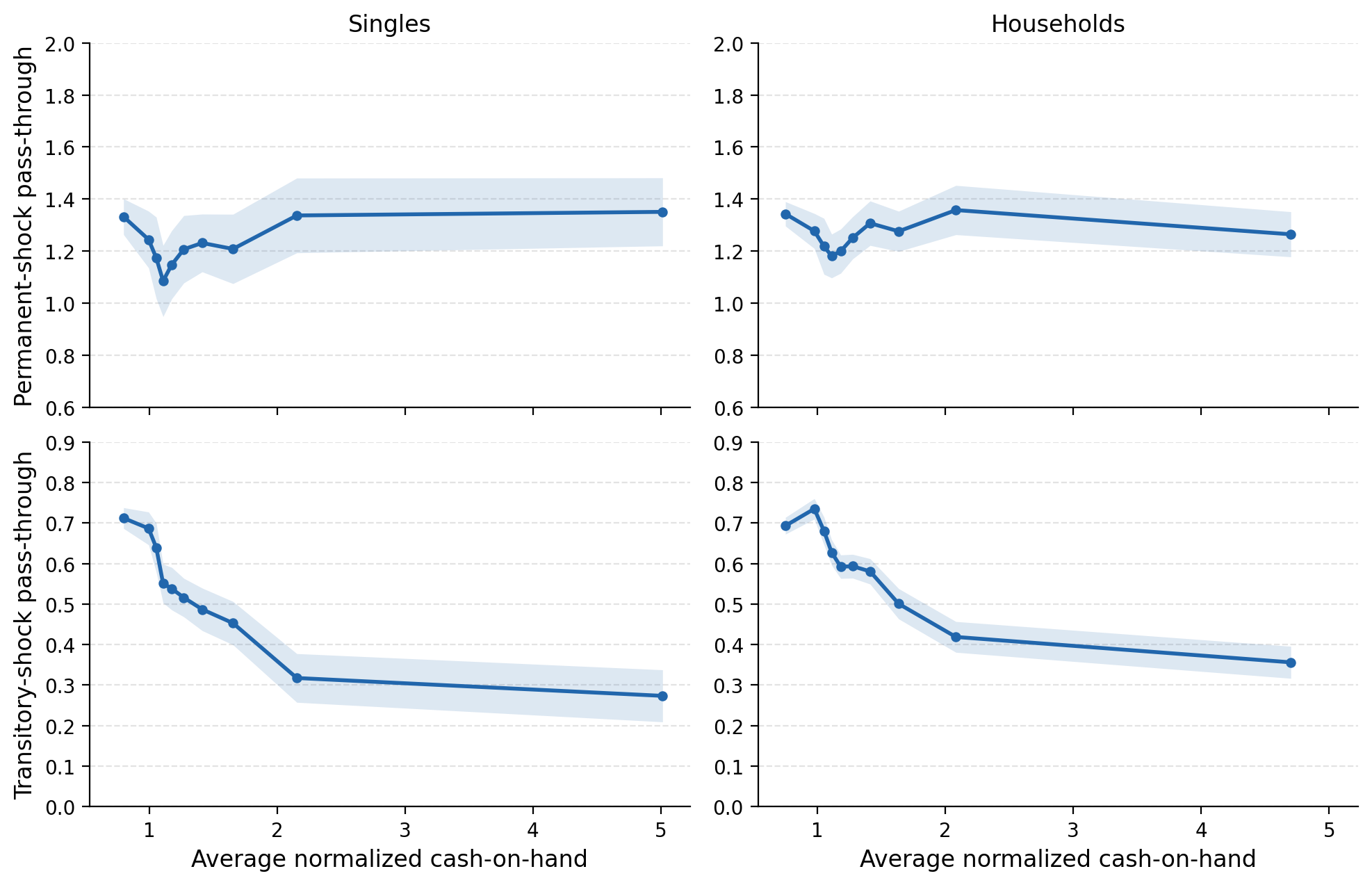}
    \figurenote{Columns report Singles and Households; rows report permanent- and transitory-shock pass-through coefficients from the OLS projection estimator. Estimates are computed within decile bins of lagged normalized cash-on-hand; the horizontal axis plots the observation-count-weighted average normalized cash-on-hand in each bin. Shaded regions show 95 percent confidence intervals based on bootstrap standard errors clustered at the household level.}
\end{figure}

\subsection{Comparison to BPP-C with bias-correction}
\label{app:bpp_c_near_lead_bias_correction}

This final comparison asks how much precision is lost by imposing the robust BPP-C lead restriction relative to a near-lead restriction. For a future-income horizon $h$, define the future-income instrument and the IV estimator as
\[
    Z_{i,t}(h)=\Delta\log Y_{i,t+h},
    \qquad
    \gamma^{IV}(h)
    =
    \frac{\operatorname{Cov}(Z_{i,t}(h),\Delta\log C_{i,t})}
         {\operatorname{Cov}(Z_{i,t}(h),\Delta\log Y_{i,t})}.
\]
With an MA$(k)$ transitory income component, the BPP-C restriction uses the distant lead $h=k+1$: $h=2$ under MA(1) and $h=3$ under MA(2). This lead filters out the permanent shock and the lagged transitory shocks that may enter consumption growth, which makes it robust to misspecification of the consumption equation. The cost is that the identifying covariance is proportional to the last MA coefficient, so the instrument becomes weak when transitory persistence is modest.

A shorter lead is more precise but is mechanically biased under the MA income process. However, this bias can be corrected knowing the parameters of the income process. Under the baseline MA(1) process and the state-conditioned pass-through equation used in the main text,
\[
    \operatorname{plim}\gamma^{IV}(1)
    =
    \frac{\gamma_0}{1-\theta_1}.
\]
and thus the bias-corrected estimate is
\[
    \widehat{\gamma}^{IV,1,\mathrm{bc}}_0
    =
    (1-\widehat{\theta}_1)\widehat{\gamma}^{IV}(1).
\]
For the MA(2) robustness specification, Figure~\ref{fig:bpp_near_distant} reports the one-lead bias-corrected estimate. The correction multiplies the one-lead estimate by $B_1(\widehat{\theta}_1,\widehat{\theta}_2)/(1-\widehat{\theta}_1)$, where $B_1(\theta_1,\theta_2)\equiv-\operatorname{Cov}(\Delta\log Y_{i,t+1},\Delta\log Y_{i,t})/\operatorname{Var}(\varepsilon_{i,t})$.

This bias correction is not a generic replacement for BPP-C. It relies on a correctly specified state-conditioned consumption equation: lagged transitory shocks must be included in the predetermined state rather than enter the omitted part of consumption growth. It also relies on negligible income measurement error. With classical measurement error in income, the near-lead covariance contains an additional noise term, and the same scalar correction no longer recovers $\gamma_0$ unless that noise variance is known. Figure~\ref{fig:bpp_near_distant} should therefore be read as a comparison between the precise, bias-corrected near-lead instrument under these maintained assumptions and the more robust but weaker distant-lead BPP-C instrument. We use (incorrect) standard asymptotic confidence intervals in this comparison because bootstrapping the full set of near- and distant-lead BPP-C estimates is computationally very demanding.

\begin{figure}[t]
    \centering
    \caption{Near- versus distant-lead future-income IV estimates}
    \label{fig:bpp_near_distant}
    \includegraphics[width=0.86\textwidth]{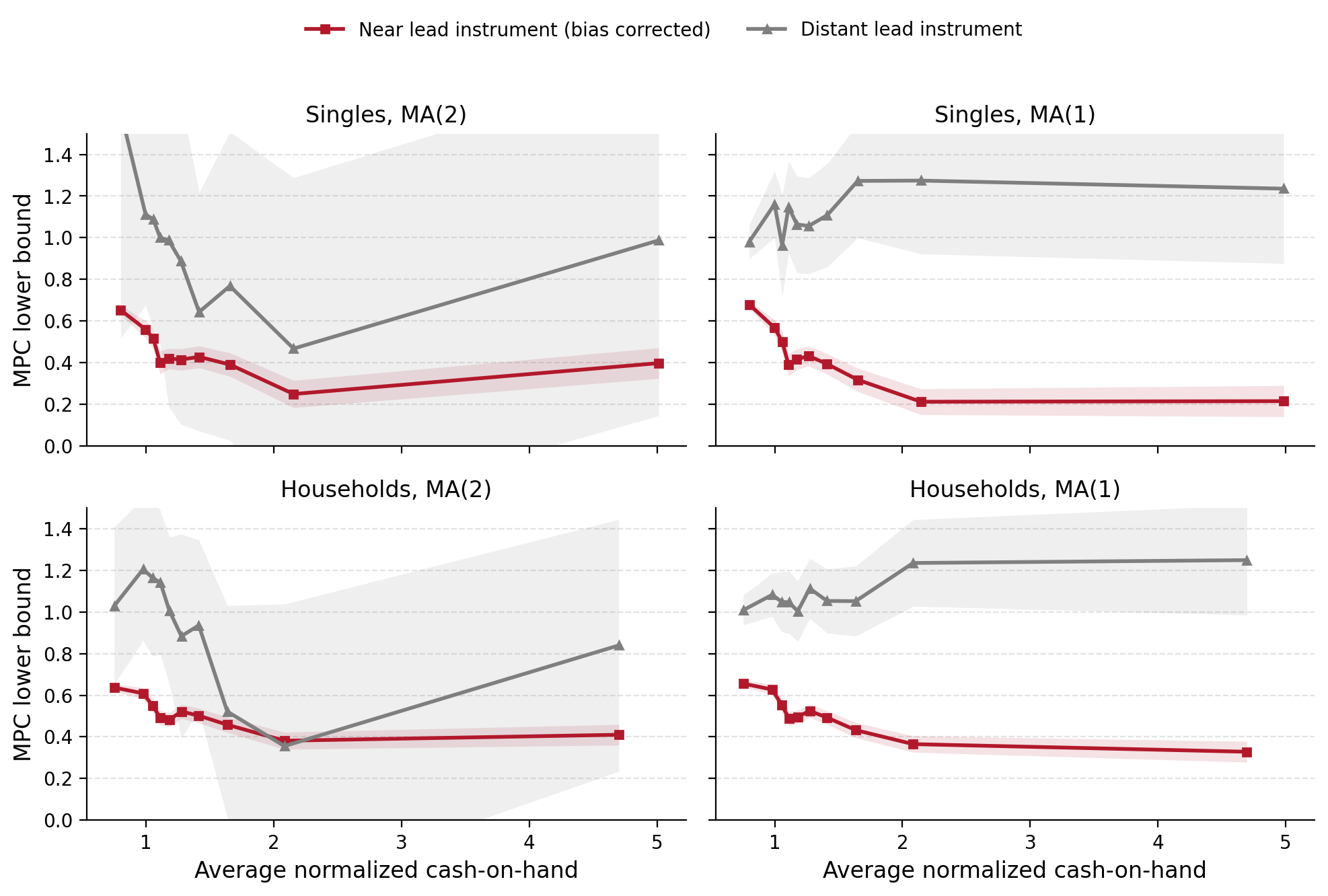}
    \figurenote{Estimates show lower bounds of the MPC, and are computed within decile bins of lagged normalized cash-on-hand; the horizontal axis plots the observation-count-weighted average normalized cash-on-hand in each bin. The near-lead instrument applies the analytical bias correction under the state-conditioned pass-through equation, using the +1 lead in both MA(1) and MA(2). The distant-lead instrument follows the robust BPP-C restriction, using the +2 lead under MA(1) and the +3 lead under MA(2). Shaded regions use (incorrect) standard asymptotic confidence intervals for all gradients because bootstrapping the full set of near- and distant-lead BPP-C estimates is computationally very demanding.}
\end{figure}

\renewcommand{\refname}{Appendix References}
\putbib[References/ref]
\end{bibunit}
\end{document}